\documentclass[sigconf]{acmart}
\AtBeginDocument{%
  }

\setcopyright{acmlicensed}
\copyrightyear{2026}
\acmYear{2026}
\acmDOI{XXXXXXX.XXXXXXX}
\acmConference[Under Review]{}{}{}

\acmISBN{978-1-4503-XXXX-X/2018/06}

\usepackage{xcolor}
\usepackage{tcolorbox}
\usepackage{bm}
\usepackage{booktabs} 
\usepackage{multirow}
\usepackage{makecell}
\usepackage{graphicx}
\usepackage{subcaption}
\usepackage{float} 
\usepackage{hyperref}
\usepackage{xcolor} 
\definecolor{mypink}{RGB}{255, 105, 180} 
\usepackage{arydshln}

\theoremstyle{plain}
\newtheorem{theorem}{Theorem}[section]

\newtheorem{lemma}[theorem]{Lemma}

\theoremstyle{definition}

\newtheorem{assumption}[theorem]{Assumption}
\theoremstyle{remark}

\begin{document}

\title[RecGOAT]{RecGOAT: Graph Optimal Adaptive Transport for LLM-Enhanced Multimodal Recommendation with Dual Semantic Alignment}

\author{Yuecheng Li}
\affiliation{%
  \institution{Kuaishou Technology}
  \city{Beijing}
  \country{China}
}
\email{liyuecheng@kuaishou.com}

\author{Hengwei Ju}
\affiliation{%
  \institution{Fudan University}
  \city{Shanghai}
  \country{China}
}
\email{23210240199@m.fudan.edu.cn}

\author{Zeyu Song}
\affiliation{%
  \institution{Kuaishou Technology}
  \city{Beijing}
  \country{China}
}
\email{songzeyu@kuaishou.com}

\author{Wei Yang}
\affiliation{%
  \institution{University of Southern California}
  \city{Los Angeles}
  \country{USA}
}
\email{weiyangvia@gmail.com}

\author{Chi Lu}
\affiliation{%
  \institution{Kuaishou Technology}
  \city{Beijing}
  \country{China}
}
\email{luchi@kuaishou.com}

\author{Peng Jiang}
\affiliation{%
  \institution{Kuaishou Technology}
  \city{Beijing}
  \country{China}
}
\email{jiangpeng@kuaishou.com}

\author{Kun Gai}
\affiliation{%
  \institution{Unaffiliated}
  \city{Beijing}
  \country{China}
}
\email{gai.kun@qq.com}

\begin{abstract}
Integrating large language model (LLM) representations into multimodal recommendation has shown promise, yet a fundamental challenge remains largely overlooked: the semantic heterogeneity between generative LM representations and the ID-based collaborative signals that recommendation systems rely on. Naively injecting LM features without alignment degrades recommendation performance rather than improving it.
To resolve this, we propose \textbf{RecGOAT}, a dual-granularity semantic alignment framework built on graph neural networks and optimal transport theory. RecGOAT first enriches collaborative semantics through multimodal attentive graphs that capture item-item, user-item, and user-user relationships, initializing user representations via LLM-inferred behavioral preferences. It then aligns LM-derived modality representations with recommendation IDs at two complementary granularities: (1) instance-level alignment via cross-modal contrastive learning (CMCL), which produces discriminative per-sample representations; and (2) distribution-level alignment via optimal adaptive transport (OAT), which minimizes the 1-Wasserstein distance across modal distributions to produce a unified, consistently aligned feature space.
Theoretically, we prove that the unified representation achieves strictly lower target error than any single-modality representation, with the gap bounded by the Wasserstein distance and the InfoNCE loss—providing rigorous guarantees for both alignment consistency and fusion comprehensiveness. Extensive experiments on three public benchmarks demonstrate state-of-the-art performance. Deployment on a large-scale online advertising platform further validates RecGOAT's industrial scalability. Our code is available at \textcolor{blue}{\url{https://github.com/6lyc/RecGOAT-LLM4Rec}}.

\end{abstract}

\begin{CCSXML}
<ccs2012>
   <concept>
       <concept_id>10002951.10003317.10003347.10003350</concept_id>
       <concept_desc>Information systems~Recommender systems</concept_desc>
       <concept_significance>500</concept_significance>
       </concept>
   <concept>
       <concept_id>10002951.10003317.10003371.10003386</concept_id>
       <concept_desc>Information systems~Multimedia and multimodal retrieval</concept_desc>
       <concept_significance>300</concept_significance>
       </concept>
 </ccs2012>
\end{CCSXML}

\ccsdesc[500]{Information systems~Recommender systems}
\ccsdesc[300]{Information systems~Multimedia and multimodal retrieval}
\keywords{Multimodal Recommendation, Large Language Models, Semantic Alignment, Graph Neural Networks, Optimal Transport}


\maketitle

\section{Introduction}

\begin{figure}[t]
    \centering
    \includegraphics[width=3.2in]{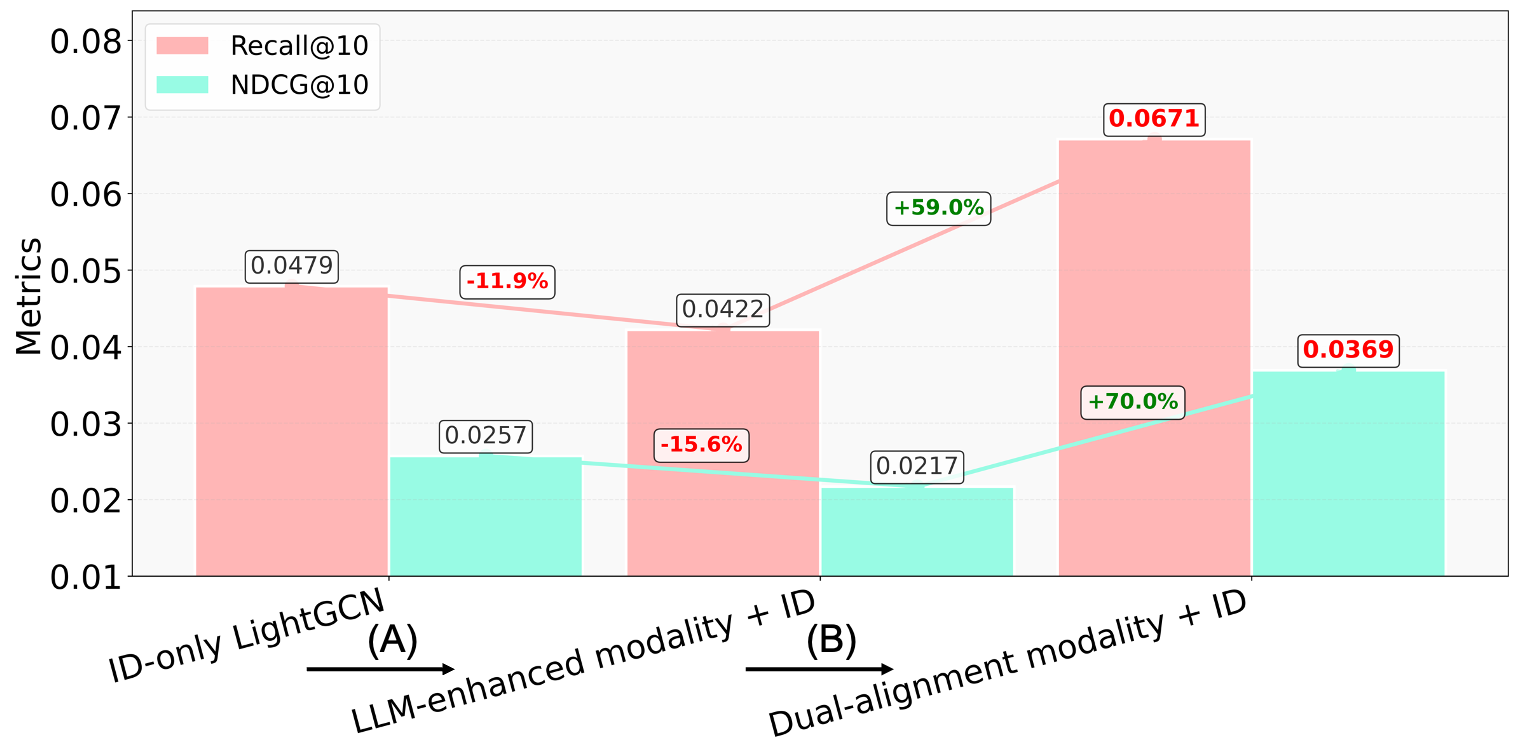}
    \caption{Performance comparison between LM representations with or without alignment for recommendation systems on Baby Dataset. (A) Due to semantic heterogeneity, LM Representation w/o alignment leads to degradation in recommendation performance. (B) Through our dual‑granularity alignment, the semantic conflict is resolved, yielding performance improvements of 59\% and 70\%, respectively.} 
    \label{g1}
\end{figure}

Recommendation systems (RS) have been widely adopted as essential filtering mechanisms in the era of information overload \cite{he2017NCF,wang2019NGCF, xia2022HCCF, zhao2024recommenderinllm, lin2025rsbllm}. However, the sparsity of explicit user-item interaction data severely constrains recommendation performance, particularly in large-scale recommendation scenarios. To address this, multimodal recommendation improves system performance by leveraging rich item content (such as textual descriptions and product images) to complement interaction signals, thereby alleviating data sparsity, enabling more accurate personalized recommendations, and enhancing the user experience \cite{chen2019VECF, zhou2023bm3, yang2024mami, yang2025fitmm,xu2025MMrecsurvey}.

Earlier studies primarily focused on employing convolutional neural networks (e.g., VGG, ResNet) and word embedding models (e.g., GloVe, BERT) to learn visual and textual modal information \cite{he2016vbpr, cui2018mvrnn, pomo2025VLRec}. These features were then fused with ID features via weighting, concatenation, or element-wise multiplication \cite{cui2018mvrnn, malitesta2025modalfusion}. To better capture higher-order interactions, many researchers have introduced graph neural networks (GNN) into multimodal recommendation, constructing user‑item and item‑item graphs to learn complex structural relations \cite{gao2023GNNRec, anand2025GNNRec}. However, due to constraints in network architecture and depth, the feature extraction capabilities of these models remain limited, and they lack sufficient semantic understanding of users and items. For example, they struggle to effectively initialize user ID embeddings, typically relying on random initialization \cite{he2020lightgcn, zhou2023freedom} or aggregating modal features from historically interacted items \cite{lin2024gume}. In practice, this means they fail to accurately depict user characteristics and genuine preferences, ultimately depending on intricate behavior modeling and supervised learning in downstream tasks.

Recently, several new architectures have been introduced into multimodal recommendation to address the challenge of deep semantic understanding, such as Transformer-based modal fusion frameworks \cite{li2023RecFormer, yi2024UGT}, Diffusion Model-based modal denoising and generative models \cite{yang2023DreamRec, jiang2024diffmm, li2025MoDiCF}, and Mamba-based efficient sequential modeling methods \cite{wang2025findrec}. With the scaling law being successfully validated across various fields, the latest research has further attempted to integrate LLMs, LVMs, and MLLMs with recommendation systems \cite{bao2023tallrec, wei2024UniMP, yi2025GollaRec}, aiming to leverage their rich world knowledge to enhance semantic representations of each modality. However, current LM-enhanced multimodal recommendation models still exhibit noticeable deficiencies in aligning modal signals with interaction IDs \cite{bao2023tallrec, wang2025IRLLRec}. We observe that \textbf{there exists a significant semantic heterogeneity between the world knowledge encapsulated in generative large models and the ID signals relied upon for user‑item interaction modeling}, as illustrated in Figure \ref{g1}. This heterogeneity hinders existing methods from fully unleashing the potential of large models in recommendation tasks. Therefore, \textbf{achieving thorough and consistent alignment} between the large models and recommendation systems has become key to pushing the performance ceiling of current recommendation systems.

To address the aforementioned challenges, this paper proposes a dual semantic alignment multimodal \textbf{\underline{Rec}}ommendation framework via \textbf{\underline{G}}raph \textbf{\underline{O}}ptimal \textbf{\underline{A}}daptive \textbf{\underline{T}}ransport (\textbf{RecGOAT}). 
\begin{itemize}
    \item  For \textbf{intra-modal learning}, we construct a collaborative signal representation enhancement module based on multimodal attentive graphs. By building multiple  graphs across text, image, and interaction with attention mechanisms, our approach strengthens multi-hop collaborative signals among item‑item, user‑item, and user‑user, thereby capturing high‑order structural relationships on both the user and item sides. To fully leverage the world knowledge and reasoning capabilities of large models, we employ Qwen3‑Embedding‑8B and LLaVA‑1.5‑7B to encode textual and visual features of items, respectively. Meanwhile, by constructing personalized behavioral prompts, we utilize QwQ‑32B to infer each user’s multi‑dimensional item preferences, which serve as initialized textual features for users. 
    \item For \textbf{cross‑modal alignment}, we design a dual-granularity semantic alignment framework between LLM-enhanced modalities and recommendation IDs. First, we perform instance‑level alignment via cross‑modal contrastive learning across text, vision, and ID modalities, obtaining discriminative multimodal representations. Second, we introduce an optimal adaptive transport technique to achieve distribution‑level alignment and representation fusion between semantic modalities and recommendation IDs. By minimizing the 1‑Wasserstein distance between different modal distributions, their feature embeddings are transported via an optimal transport matrix into a unified aligned space, yielding consistent and comprehensive fused item representations. Additionally, we incorporate adaptive learnable parameters into each transport matrix, which bridges OT alignment with the downstream recommendation task and enables precise supervision of the transport process. Notably, we provide a theoretical proof that the mutual constraint between any single‑source modal distribution and the unified fused distribution can be bounded by the Wasserstein distance and the InfoNCE loss. This demonstrates that the unified representations optimized in RecGOAT achieves strong \textbf{alignment consistency} and \textbf{fusion comprehensiveness}.
\end{itemize}

The main contributions of this paper are summarized as follows:

\begin{itemize}
    \item We propose RecGOAT, an LLM-enhanced multimodal recommendation framework that resolves the semantic heterogeneity between LLM-derived modality representations and ID-based collaborative signals. Our RecGOAT unifies structure-aware graph augmentation with a dual-granularity alignment objective, consisting of instance-level cross-modal contrastive learning (CMCL) and distribution-level optimal adaptive transport (OAT).
    \item We theoretically establish that the unified representation achieves a lower target error than any individual modality, with the error gap rigorously bounded by the Wasserstein distance and the InfoNCE loss, thereby offering guarantees of fusion comprehensiveness and alignment consistency.
    \item Extensive experiments on three public datasets and a large-scale online advertising platform demonstrate the effectiveness and scalability of RecGOAT. Ablations and analyses further confirm the necessity of OT-based distribution alignment and validate the claimed alignment consistency and fusion comprehensiveness.
\end{itemize}

\section{Related Work}
\subsection{Multimodal Recommendation}

Multimodal recommendation address the challenge of sparse user‑item interaction by extracting and integrating rich content features, such as textual, visual, and acoustic information. VBPR \cite{he2016vbpr} first extracted visual features via CNN and fused them with ID features through a weighted loss function. 
Given the strong capability of GNNs in capturing high‑order structural relations, MMGCN \cite{wei2019mmgcn} constructs user‑item bipartite graphs per modality and performs multi‑level message propagation. Further, LightGCN \cite{he2020lightgcn} simplifies graph convolution for collaborative filtering by retaining only the essential neighbor aggregation component. 
Moreover, LATTICE \cite{zhang2021LATTICE} and FREEDOM \cite{zhou2023freedom} construct dynamic/frozen item‑item graphs to further explore the potential of graph learning.

In recent years, inspired by the success of novel architectures such as Transformer \cite{vaswani2017attention} and Mamba \cite{gu2024mamba} in various fields, researchers have also introduced them into multimodal recommendation to enhance the representation of user preferences and item features. RecFormer \cite{li2023RecFormer} discards item ID dependency and addresses cold‑start and cross‑domain transfer problems with a bidirectional Transformer. UGT \cite{yi2024UGT} strengthens modality alignment and fusion through an end‑to‑end architecture combining multi‑way Transformer and a unified GNN. Furthermore, the Diffusion Model paradigm shifts recommendation from “classification” to “generation”: DreamRec \cite{yang2023DreamRec} generates oracle items via guided diffusion, avoiding negative sampling to eliminate noise interference; DiffMM \cite{jiang2024diffmm} extends multimodal adaptation by generating modality‑aware interaction graphs and incorporating cross‑modal contrastive learning. To improve model efficiency, FindRec \cite{wang2025findrec} employs linear‑complexity Mamba layers to model long‑range sequential dependencies, integrating Stein kernel distribution alignment with cross‑modal expert routing.

Despite notable progress in feature representation, scenario adaptation, and efficiency optimization achieved by the above multimodal recommendation methods, their core limitation lies in \textbf{their lack of the model scale required for deep semantic abstraction and reasoning}.

\subsection{LM-enhanced Recommendation}

Large models (LMs), empowered by their strong semantic understanding, cross‑modal integration, and knowledge transfer capabilities, have effectively compensated for key limitations of traditional recommendation systems, including inefficient modal fusion and insufficient fine‑grained preference modeling. They have thus gradually become a core driving force in advancing multimodal recommendation \cite{lopez2025LLM4MRSsurvey}. TALLRec \cite{bao2023tallrec} proposed an efficient two‑stage tuning framework, offering a foundational solution for adapting LLMs to recommendation scenarios. Rec‑GPT4V \cite{liu2024Rec-gpt4v} designed a Visual-Summary Thought strategy to convert item images into structured textual summaries. To address the issue that LLMs tend to overlook visual information in end‑to‑end fine‑tuning, NoteLLM‑2 \cite{zhang2025notellm2} introduced multimodal in‑context learning, contrastive learning, and a late‑fusion mechanism to balance attention across modalities. Differing from a single‑task focus, UniMP \cite{wei2024UniMP} constructed a unified multimodal personalization framework, integrating heterogeneous information through a unified data format and realizing multimodal alignment and fusion via cross‑layer cross‑attention. IRLLRec \cite{wang2025IRLLRec} focused on intent representation learning, employing dual‑tower alignment and momentum distillation to align textual intents with interaction intents. 


While the aforementioned LM‑enhanced recommendation methods have demonstrated strong performance, \textbf{their alignment mechanisms remain largely confined to instance‑level or pair‑wise local alignment, without considering global distribution‑level alignment}. As a result, these models struggle to capture global patterns across multimodal, cross‑domain, or interactive data.

\subsection{OT-based Distribution Alignment}

In the field of modal alignment, prior research has primarily relied on techniques such as supervised fine‑tuning \cite{bao2023tallrec}, contrastive learning \cite{zhang2025notellm2}, and graph‑structural learning (or cross‑attention mechanisms) \cite{yi2025GollaRec, wei2024UniMP}. However, these approaches are often confined to instance‑level or local node‑pair alignment, failing to account for the overall cross‑modal feature distribution. To overcome this limitation, we further introduce the concept of \textbf{distribution alignment}. Compared with commonly used KL divergence, the Wasserstein distance from optimal transport (OT) theory can directly characterize the geometric structure (e.g., shape and distance) between distribution supports, making it more suitable for matching and aligning modal distributions \cite{santambrogio2015OTapp, peyre2025OTML}. OTKGE \cite{cao2022otkge} proposed an optimal transport‑based knowledge graph embedding method, formulating multimodal fusion as an OT problem and optimizing the Wasserstein distance between distributions. Moreover, GOT \cite{chen2020got} constructs dynamic graphs and integrates the Wasserstein distance for node matching and the Gromov–Wasserstein distance for edge matching. In multimodal recommendation, MOTKD \cite{yang2023MOTKD} employs optimal transport to align textual, visual, and acoustic modalities, and designs a multi‑level knowledge distillation module to further strengthen alignment. However, \textbf{it does not explore the potential of OT for aligning modalities with IDs, nor does it provide theoretical guarantees for such alignment.}

\begin{figure*}[t]
\centering
\includegraphics[width=7.2in]{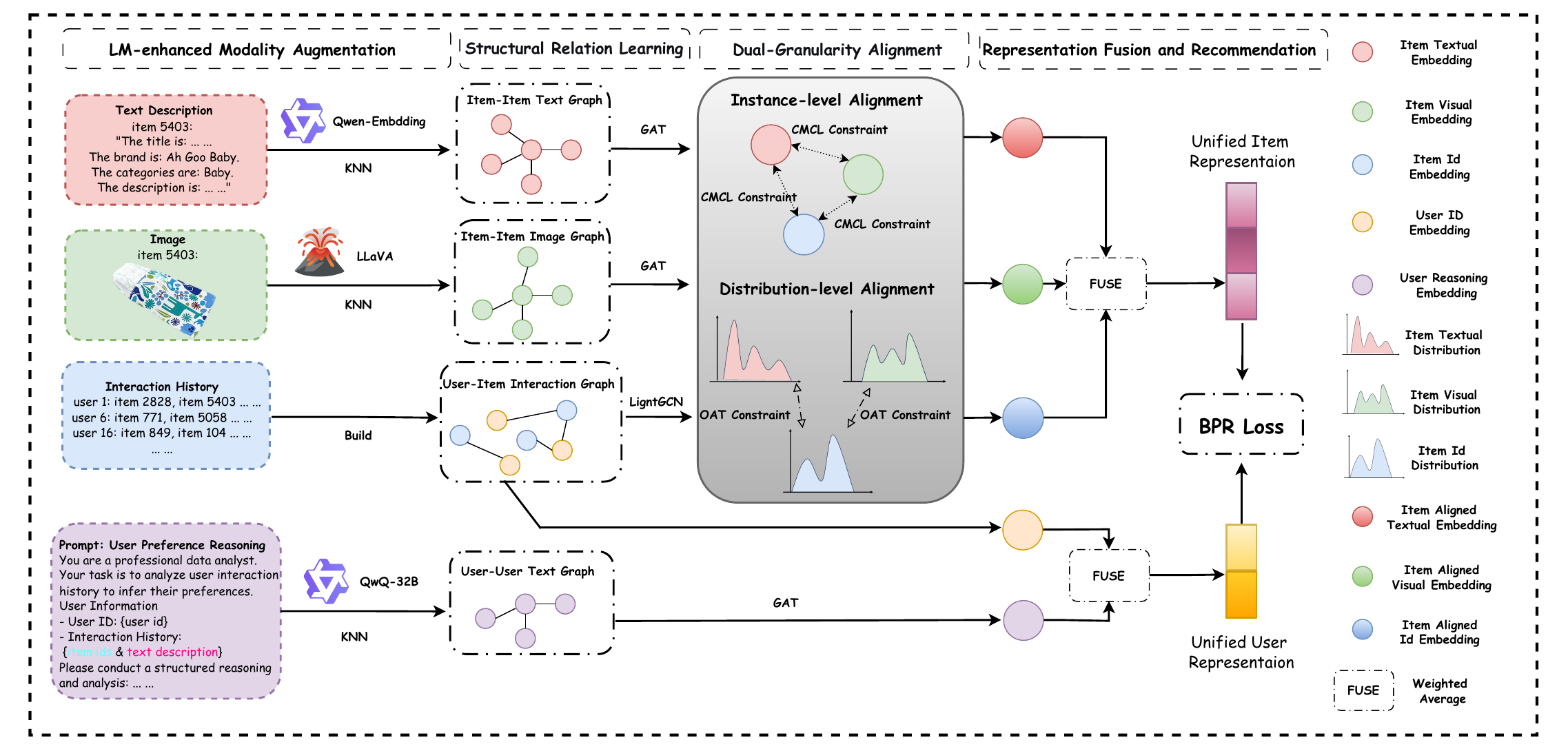}
\caption{The overall framework of our RecGOAT. It sequentially performs LM-enhanced modality augmentation for feature extraction, structural relation learning for graph-based collaborative signal modeling, dual-granularity alignment (instance and distribution levels) for cross-modal consistency, and final representation fusion \& recommendation.}
\label{g3}
\end{figure*}

\section{Methodology}

To address the semantic heterogeneity between modalities and IDs and fully unleash the potential of LMs in multimodal recommendation, we propose a novel dual semantic alignment framework, \textbf{RecGOAT}, which operates at both instance‑level and distribution‑level alignment. Section \ref{sec:3.1} and Section \ref{sec:3.2} describe the intra‑modal graph learning modeling with LM-enhanced modality and the cross‑modal dual semantic alignment framework, respectively, followed by the theoretical guarantees of our approach in Section \ref{sec:3.3}. In Section \ref{sec:3.4}, we employ the aligned unified representations for recommendation preference optimization. In Section \ref{sec:3.5}, we analyze the time and space complexities of the key modules in RecGOAT, demonstrating its potential for deployment in large-scale recommender systems. The overall architecture of RecGOAT is illustrated in Figure \ref{g3}.

\subsection{Intra‑modal: LM‑enhanced Modality Augmentation and Graph Learning}
\label{sec:3.1}
For each modality, we introduce large models (including LLMs and LVLMs) for enhancement and extract high‑order collaborative information by constructing attentive graphs.

\subsubsection{Item-Item Multimodal Graph Representation Learning}
\label{sec:3.1.1}
To enhance high‑order collaborative relationships between items, we construct separate textual and visual modality graphs \(\mathcal{G}^m = \{\mathcal{I}, \mathcal{E}^m, \mathcal{X}^m\}\), where \(m \in \{t, v\}\). Inspired by FREEDOM \cite{zhou2023freedom}, we adopt the K‑nearest neighbors (KNN) algorithm to build frozen item‑item graphs based on the initial LM‑enhanced modal features \(\mathcal{X}^m = \{\bm{x_i^m} \mid i=1,2,\dots,|\mathcal{I}|\}\). The LM-enhanced modal features \(\mathcal{X}^m\) are obtained by processing raw item content with pretrained large models. Specifically, for an item \(i\), its raw textual description \(t_i\) and visual image \(v_i\) are encoded separately:
    
\begin{equation}
\begin{aligned}
\bm{x_i^t} = f_{t}(t_{i} \mid \theta_{t}), \quad
\bm{x_i^v} = f_{v}(v_{i} \mid \theta_{v}),
\end{aligned}
\end{equation}
where \( f_{t} \) and \( f_{v} \) represent the pretrained LLM (e.g., Qwen-Embedding \cite{zhang2025qwen3embedding}) and LVLM (e.g., LLaVA \cite{liu2023llava}) parameterized by \( \theta_t \) and \( \theta_v \), respectively. For two items \(i, j \in \mathcal{I}^m\), the edge weight in the graph is computed via cosine similarity as follows: \( s_{ij}^m = \frac{\bm{x_i^m} \cdot \bm{x_j^m}}{\|\bm{x_i^m}\| \|\bm{x_j^m}\|}.
\)

To sparsify the modality graph, we retain only the top‑\(K\) edges with the highest similarity for each item node:

\begin{equation}
\mathcal{E}^m = \big\{ e_{ij}^m \mid i, j \in \mathcal{I}^m \big\}, \quad 
e_{ij}^m = 
\begin{cases} 
1, & \text{if } s_{ij}^m \in \operatorname{top-K}( \{s_{ik}^m \mid k \neq i\} ), \\
0, & \text{otherwise},
\end{cases}
\end{equation}
where \(e_{ij}^m = 1\) indicates the presence of an association edge between the two items.

After obtaining the graph \(\mathcal{G}^m\) for each modality, we employ Graph Attention Network \cite{velivckovic2018gat} to learn node representations \(\bm{z_i^m}\):





\begin{equation}
\bm{z_i^m} = \Bigg\|_{h=1}^{H} \sigma \left( \sum_{j \in \mathcal{N}_i^m} \alpha_{ij}^{m,h} \mathbf{W}^h \bm{x_j^m} \right) \in \mathbb{R}^d,
\end{equation}
where \(\alpha_{ij}^{m,h}\) is the normalized attention weight between node \(i\) and its neighbor \(j \in \mathcal{N}_i^m\) computed by the \(h\)-th attention head, \(\mathbf{W}^h\) is the transformation matrix corresponding to the \(h\)-th head. Ultimately, the representation of each item under modality \(m\) is enhanced by aggregating neighborhood information through multi‑head attention, thereby more effectively capturing high‑order collaborative signals within the modality.

\subsubsection{ID Embedding Learning from User-Item Interaction Graph}
\label{sec:3.1.2}
To capture collaborative filtering signals from implicit feedback, we construct a user-item interaction graph \(\mathcal{G}_{ui} = (\mathcal{U} \cup \mathcal{I}, \mathcal{E}_{ui})\), where \(\mathcal{U}\) and \(\mathcal{I}\) represent the sets of users and items, respectively. An edge \(e_{ui} \in \mathcal{E}_{ui}\) exists if user \(u\) has interacted with item \(i\). 

The ID embeddings for users and items are initialized as learnable parameters, denoted as \(\mathbf{E}^{(0)}_u \in \mathbb{R}^{|\mathcal{U}| \times d}\) and \(\mathbf{E}^{(0)}_i \in \mathbb{R}^{|\mathcal{I}| \times d}\), where \(d\) is the embedding dimension. Following the lightweight design of LightGCN \cite{he2020lightgcn}, the propagation rule at the \(l\)-th layer is defined as:

\begin{equation}
\mathbf{E}_u^{(l+1)} = \sum_{i \in \mathcal{N}_u} \frac{r_{ui}}{\sqrt{|\mathcal{N}_u|} \sqrt{|\mathcal{N}_i|}} \mathbf{E}_i^{(l)}, \quad
\mathbf{E}_i^{(l+1)} = \sum_{u \in \mathcal{N}_i} \frac{r_{ui}}{\sqrt{|\mathcal{N}_u|} \sqrt{|\mathcal{N}_i|}} \mathbf{E}_u^{(l)},
\end{equation}
where \(\mathcal{N}_u\) and \(\mathcal{N}_i\) are the neighbor sets of user \(u\) and item \(i\) in \(\mathcal{G}_{ui}\), respectively. To incorporate explicit preference signals available in certain datasets (e.g., review ratings in Amazon datasets), we introduce the rating value \(r_{ui}\) as attention coefficient between user \(u\) and item \(i\).

After \(L\) propagation layers, the final ID embeddings are obtained by averaging the representations from all layers: 

\begin{equation}
\bm{z^{id}_u} = \frac{1}{L+1} \sum_{l=0}^{L} \mathbf{E}_u^{(l)}, \quad
\bm{z^{id}_i} = \frac{1}{L+1} \sum_{l=0}^{L} \mathbf{E}_i^{(l)}.
\end{equation}

These refined ID embeddings encode high-order collaborative relations and are subsequently used for cross-modal alignment with LLM semantics.

\subsubsection{User-User Graph Learning with LLM Contextual Enhancement}
\label{sec:3.1.3}
Traditional ID-based recommendation models, compared to LLMs, lack persistent world knowledge and operate on coarse-grained IDs, which limits their generalization ability and understanding of various items. To fully leverage the world knowledge and reasoning capabilities of LLMs, we construct personalized behavioral prompts for each user based on their interaction history within graph \(\mathcal{G}_{ui}\) and the corresponding textual descriptions of interacted items. The prompt template is designed as follows: 

\tcbset{colback=gray!10, colframe=gray!88, boxrule=0.5pt, arc=2pt}

\begin{tcolorbox}[title=Prompt: User Preference Reasoning, fonttitle=\bfseries]
You are a professional data analyst. Your task is to analyze a user's interaction history to infer their preferences.


- User ID: \texttt{\{user id\}}
    
- Interaction History: \texttt{\{item ids \& text description\}}

Please conduct a structured reasoning by two steps:

- Identify Common Attributes Across Items: ... ...

- Summarize Preferences Across Multiple Dimensions: ... ...

Output Format: 

\texttt{<think> reasoning process here </think>}

\texttt{<answer> answer here </answer>}
\end{tcolorbox}
Next, the open‑source QwQ‑32B \cite{yang2025qwen3} is employed to infer user preferences based on this prompt. The generated answer within the \colorbox{gray!10}{\texttt{<answer>...</answer>}} are encoded into embeddings that serve as the user’s textual modal features. Let \(\mathcal{H}_u = \{(i, t_i) \mid i \in \mathcal{N}_u\}\) denote the set of interacted items and their textual descriptions for user \(u\), and let \(\mathcal{P}\) represent the structured prompt template. The LLM‑based preference reasoning and embedding generation are modeled as:

\begin{equation}
\begin{aligned}
a_u = f_{u}(\mathcal{H}_u \mid \mathcal{P}, \theta_{u}), \quad
\bm{{x}_u^t} = f_{t}(a_u \mid \theta_{t}),
\end{aligned}
\end{equation}
where \(a_u\) is the structured textual answer generated by the LLM parameterized by \(\theta_u\), and \({x}_u^t\) denotes the resulting textual modal feature obtained via the text encoder parameterized by \(\theta_t\).

Subsequently, a user-user textual modal graph \( \mathcal{G}^t_{uu} = (\mathcal{U}, \mathcal{E}^t_{uu}, \mathcal{X}^t_u) \) is constructed, where nodes represent users, features are the LLM-enhanced embeddings \( \mathcal{X}^t_u = \{\bm{{x}_u^t}\} \), and edges \( \mathcal{E}^t_{uu} \) are established based on the cosine similarity between user features, sparsified by retaining only the top-K connections for each node. Graph learning is then performed on \( \mathcal{G}^t_{uu} \) following the same multi-head graph attention network described in Section \ref{sec:3.1.1}. This allows the propagation and refinement of high-level, semantically enriched preferences among similar users, and the final refined user representation \( \bm{z_u^t} \) aggregates contextual signals from peers with semantically aligned preferences.

\subsection{Cross‑modal: Dual-Granularity Alignment of LLM-enhanced Modalities and ID Signals}
\label{sec:3.2}
To fully align LLM-enhanced semantic representations with recommendation ID signals, we propose a dual‑granularity alignment framework, consisting of instance‑level alignment based on cross‑modal contrastive learning and distribution‑level alignment based on optimal adaptive transport. These two components interact with and reinforce each other, jointly optimizing towards consistent and comprehensive item embeddings that organically integrate LLM semantics with interaction signals, thereby unleashing the full potential of LLM‑enhanced multimodal recommendation.

\subsubsection{Instance-level Alignment via \underline{C}ross-\underline{M}odal \underline{C}ontrastive \underline{L}earning (\textbf{CMCL})}
To achieve fine‑grained semantic alignment at the instance level, we perform cross‑modal contrastive learning that explicitly narrows the representation gap between different modalities of the same item while pushing apart those of different items. For each item \(i\), we sample paired representations from its available modalities: \textit{(ID, text)}, \textit{(ID, visual)}, and \textit{(text, visual)}. Each pair is treated as a positive example, while representations from different items within the same modality are considered negatives.

The contrastive objective is built upon the InfoNCE loss \cite{chen2020simcl}, which encourages the similarity between positive pairs to be higher than that between negative pairs. Formally, for a given anchor representation \(\bm{z_i^{m_a}}\) from modality \(m_a\) and its positive counterpart \(\bm{z_i^{m_p}}\) from modality \(m_p\), the contrastive loss for this pair is defined as:

\begin{equation}
\mathcal{L}_{i}^{(m_a, m_p)} = -\log \frac{\exp\big(\mathrm{sim}(\bm{z_i^{m_a}}, \bm{z_i^{m_p}}) / \tau\big)}{\sum_{j \in \mathcal{B}} \exp\big(\mathrm{sim}(\bm{z_i^{m_a}}, \bm{z_j^{m_p}}) / \tau\big)},
\end{equation}
where \(\mathrm{sim}(\cdot,\cdot)\) denotes cosine similarity, \(\tau\) is a temperature hyperparameter, and \(\mathcal{B}\) is the set of all items in the current batch that provide negative samples. The overall instance‑level cross-modal contrastive learning loss aggregates over all three modality pairs:

\begin{equation}
\mathcal{L}_{\text{CMCL}} = \frac{1}{B}\sum^{B}_{i=1} \sum_{(m_a, m_p) \in \mathcal{MP}} \mathcal{L}_{i}^{(m_a, m_p)},
\end{equation}
with \(\mathcal{MP} = \{(id, t), (id, v), (t, v)\}\).

The representations refined through this contrastive learning process are not only semantically discriminative within and across modalities, but also provide a meaningful similarity structure that reflects genuine semantic relatedness \cite{li2024cdnmf}. In the next section, these aligned representations are used to compute the cost matrix in the Optimal Transport problem. This ensures that the transport cost between two items captures their deep semantic discrepancy, thereby guiding the OT to perform semantically‑aware distribution alignment rather than relying solely on raw feature distances.

\subsubsection{Distribution-level Alignment via \underline{O}ptimal \underline{A}daptive \underline{T}ransport (\textbf{OAT})}

Semantic heterogeneity at the distribution level undermines the efficacy of large models in multimodal recommendation and caps the performance ceiling of existing systems. To achieve principled alignment between LLM‑enhanced modal representations and recommendation ID embeddings, we formulate the semantic alignment process as an Optimal Transport (OT) problem \cite{peyre2019computationalOT}. Specifically, we aim to transport the LLM‑augmented semantic feature distribution (\textit{source}) to match the collaborative ID feature distribution (\textit{target}), which naturally quantifies and minimizes the distributional divergence between heterogeneous semantic spaces.

Formally, let \(P^m\) denote the empirical distribution of the LLM-enhanced modality \(m\) (where \(m \in \{t, v\}\)), and \(Q^{id}\) denote the empirical distribution of the ID embeddings obtained from Section \ref{sec:3.1.2}. The OT problem seeks a coupling \(t\) that minimizes the total cost of moving mass from \(P^m\) to \(Q^{id}\). In its continuous form, this is expressed as:

\begin{equation}
OT(P^m, Q^{id}) = \inf_{t \in \Pi(P^m, Q^{id})} \int_{\mathcal{Z}^m \times \mathcal{Z}^{id}} c(\bm{z^m}, \bm{z^{id}}) \, d \, t(\bm{z^m}, \bm{z^{id}}),
\end{equation}
where \(\Pi(P^m, Q^{id})\) is the set of all joint distributions with marginals \(P^m\) and \(Q^{id}\), and \(c: \mathcal{Z}^m \times \mathcal{Z}^{id} \rightarrow \mathbb{R}^+\) is a cost function measuring the semantic dissimilarity between a source feature \(\bm{z^m}\) and a target feature \(\bm{z^{id}}\).

To concretely quantify the gap between LLM semantics and ID-based collaborative signals, we define the \textbf{feature-wise cost} as the normalized $L_1$ distance between feature distributions. For a batch of \(B\) samples, let \(\bm{Z^m} \in \mathbb{R}^{B \times d}\) represent the LLM-enhanced features from modality \(m\) and \(\bm{Z^{id}} \in \mathbb{R}^{B \times d}\) represent the ID embeddings. The cost matrix \( \bm{C^m} \in \mathbb{R}^{d \times d} \) is computed as:

\begin{equation}
\mathbf{C}^m_{ij} = s \cdot \frac{1}{B} \sum_{b=1}^{B} \left\| \bm{Z_{b,i}^m} - \bm{Z_{b,j}^{id}} \right\|_1,
\end{equation}
where $s$ is scaling factor used to adjust the scale of the cost matrix and ensure numerical stability.

Let \(P^m = \frac{1}{B}\sum_{i=1}^B \delta_{\bm{z_i^m}}\), and \(Q^{id} = \frac{1}{B}\sum_{i=1}^B \delta_{\bm{z^{id}_i}}\), where $\delta(\cdot)$ is the Dirac delta function. Then, the discrete OT problem then reduces to minimizing the 1‑Wasserstein distance \(\mathcal{W}_1\)  between the two distributions:
\begin{equation}
\mathcal{W}_1(P^m, Q^{id}) = \min_{\bm{T} \in \Pi(\bm{p}, \bm{q})} \langle \bm{T}, \bm{C}^m \rangle_F =  \min_{\bm{T} \in \Pi(\bm{p}, \bm{q})} \sum_{i=1}^{d} \sum_{j=1}^{d} T_{ij} C^m_{ij},
\end{equation}
subject to \(\bm{T} \mathbf{1} = \bm{p}, \bm{T}^\top \bm{1} = \bm{q}\), where $\bm{1}$ is the all-one vector. Here, \(\bm{T} \in \mathbb{R}^{d \times d} \) is the transport plan matrix, \(\bm{p}\) and \(\bm{q}\) are uniform weight vectors, and \(\langle \cdot, \cdot \rangle_F\) denotes the Frobenius inner product.

We solve this entropy‑regularized OT problem efficiently using the Sinkhorn‑Knopp algorithm \cite{sinkhorn1967first, cuturi2013sinkhorn}, which iteratively updates row and column scaling vectors to converge linearly to an approximate optimal transport plan \(\mathbf{T}_0^m\) for each modality $m$.

To enable the OT alignment to adapt to downstream recommendation tasks, we augment the base OT plan with a learnable residual matrix \(\widetilde{\mathbf{T}}^m\). The final \textbf{adaptive transport optimal matrix} for modality \(m\) is:
\begin{equation}
\bm{T}^m = \bm{T}_0^m + \widetilde{\bm{T}}^m.
\end{equation}
This allows the model to fine‑tune the purely geometry‑driven coupling \(\bm{T}_0^m\) with task‑specific semantic corrections. Using \(\mathbf{T}^m\), each LLM-enhanced modal feature is transported toward the ID embedding space via:
\begin{equation}
\bm{\hat{Z}^m} =  \bm{Z^m} \cdot \bm{T^m}.
\end{equation}

Based on the distribution-level alignment process described above, we have effectively \textbf{mitigated the semantic heterogeneity between the LLM-enhanced modal spaces and the ID-based collaborative space}. This yields three semantically aligned item representations: \(\hat{\mathbf{Z}}^t\), \(\hat{\mathbf{Z}}^v\), and \(\mathbf{Z}^{id}\). While each of these representations captures consistent semantic information from its respective space, a comprehensive item embedding must integrate complementary cues from all available modalities. To this end, we fuse the three aligned representations via a weighted averaging:

\begin{equation} \label{equation:16}
\bm{Z} = \gamma_t \cdot \hat{\bm{Z}}^t + \gamma_v \cdot \hat{\bm{Z}}^v + \bigl(1 - \gamma_t - \gamma_v \bigr) \cdot \bm{Z}^{id},
\end{equation}
where \(\gamma_t, \gamma_v \in [0, 1]\) are hyperparameters, which determine the relative contribution of each aligned modality to the final unified representation.

The unified representation \(\mathbf{Z}\) thus embodies both semantic consistency, inherited from the distribution-aligned features, and informational comprehensiveness, achieved by combining multimodal and collaborative views. It serves as the final item embedding for downstream preference prediction, seamlessly \textbf{bridging the rich, world‑aware semantics from large models with the interaction‑driven relational knowledge from GNN‑based ID representations}.

\subsection{Theoretical Guarantees for Alignment Consistency and Fusion Comprehensiveness}
\label{sec:3.3}
To rigorously analyze the performance of RecGOAT, we provide mathematical proofs for its alignment consistency and fusion comprehensiveness, which are promoted by the joint optimization of instance-level and distribution-level alignment losses. The analysis focuses on the item side while treating user embeddings as fixed, under well-defined and realistic assumptions \cite{courty2017OTproof}.

\subsubsection{Problem Setup and Assumptions}
Let \( \bm{U} \) be the set of fixed user embeddings. For any \( \bm{u} \in \bm{U} \), we assume \( \|\bm{u}\| \leq K \) for a constant \( K > 0 \). The true preference function is denoted by \( f^*(\bm{u}, \bm{v}) \), which maps a user \( \bm{u} \) and an item \( \bm{v} \) to a real-valued score. Our model's rating function is the inner product \( f(\bm{u}, \bm{z}) = \bm{u}^\top \bm{z} \), where \( \bm{z} \) is an item representation. Let \( Q \) be the distribution of the unified item representations. We define the modality-specific error and the unified representation error as:
\begin{equation*}
\epsilon_m(f) = \mathbb{E}_{z \sim P^m} \bigl[ | f(u, z) - f^*(u, v) | \bigr],
\epsilon_F(f) = \mathbb{E}_{z \sim Q} \bigl[ | f(u, z) - f^*(u, v) | \bigr].
\end{equation*}

To facilitate the derivation, we adopt the following reasonable assumptions \cite{cao2022otkge}.
\begin{assumption}[Bounded User Embeddings]
\label{assumption:1}
 All user embeddings are fixed and bounded, i.e., \( \|\bm{u}\| \leq K \). Consequently, for a fixed \( \bm{u} \), the scoring function \( f(\bm{u}, \bm{z}) = \bm{u}^\top \bm{z} \) is \( K \)-Lipschitz continuous w.r.t \( \bm{z} \): \( \left| f(\bm{u}, \bm{z_1}) - f(\bm{u}, \bm{z_2}) \right| \leq K \cdot \|\bm{z_1} - \bm{z_2}\| \). This follows directly from the Cauchy-Schwarz inequality: \( |u^\top(\bm{z_1} - \bm{z_2})| \leq \|\bm{u}\| \cdot \|\bm{z_1} - \bm{z_2}\| \leq K \cdot \|\bm{z_1} - \bm{z_2}\| \).
\end{assumption}

\begin{assumption}[Lipschitz Continuity of True Preference]
\label{assumption:2}
    The true preference function \( f^*(\bm{u}, \bm{v}) \) is \( L^* \)-Lipschitz continuous with respect to the item representation \( \bm{z} \): \( |f^*(\bm{u}, \bm{z_1}) - f^*(\bm{u}, \bm{z_2})| \leq L^* \|\bm{z_1} - \bm{z_2}\| \). This reflects the inherent smoothness of user preferences.
\end{assumption}

\subsubsection{Supporting Lemmas and  Main Theorem} \label{sec:3.3.2}
Building upon the preceding definitions and assumptions, two key lemmas are introduced, which subsequently lead to the theorem and proof concerning alignment consistency and fusion comprehensiveness.


\begin{lemma}[Instance-level Distance Bound]
\label{lemma:1}
Let \( \bm{z}_i^m \) and \( \bm{z}_i \) be the \( L_2 \)-normalized representations for modality \( m \) and the unified representation for item \( i \) (\( i.e., \|\bm{z}_i^m\|_2 = 1, \|\bm{z}_i\|_2 = 1 \)), respectively. The expected pairwise distance is bounded by the contrastive loss:
\begin{equation}
\mathbb{E}_i\!\bigl[\|\bm{z}_i^m - \bm{z}_i\|_2\bigr] \le \sqrt{ 4 + 2\tau \ln\left( \frac{2\mathcal{L}_{CMCL}}{B-1} \right) }
\end{equation}
where \( \tau > 0 \) is the temperature parameter, \( B > 1\) is the batch size, and \( \mathcal{L}_{CMCL} \) is the global cross-modal contrastive loss.
\end{lemma}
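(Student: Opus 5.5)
The plan is to reduce everything to a per-sample bound on cosine similarity and then pass to the $L_2$ distance through the identity available for unit vectors. Since $\|\bm{z}_i^m\|_2=\|\bm{z}_i\|_2=1$, we have
\[
\|\bm{z}_i^m-\bm{z}_i\|_2^2 = 2-2\,\mathrm{sim}(\bm{z}_i^m,\bm{z}_i).
\]
By Jensen's inequality (concavity of the square root), $\mathbb{E}_i[\|\bm{z}_i^m-\bm{z}_i\|_2]\le\sqrt{\mathbb{E}_i[2-2\,\mathrm{sim}(\bm{z}_i^m,\bm{z}_i)]}$. It therefore suffices to show a lower bound of the form
\[
\mathbb{E}_i[\mathrm{sim}(\bm{z}_i^m,\bm{z}_i)]\ \ge\ -1-\tau\ln\!\bigl(\tfrac{2\mathcal{L}_{CMCL}}{B-1}\bigr).
\]
With this, $2-2\,\mathbb{E}[\mathrm{sim}]\le 4+2\tau\ln(\cdot)$, which is exactly the stated right-hand side.

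To obtain that lower bound, I will treat the unified representation as the positive partner in an InfoNCE term of the form $\mathcal{L}_i^{(m,F)}$. Writing $s_{ij}=\mathrm{sim}(\bm{z}_i^m,\bm{z}_j)$, the per-item loss is
\[
\mathcal{L}_i=-s_{ii}/\tau+\ln\Bigl(e^{s_{ii}/\tau}+\sum_{j\ne i}e^{s_{ij}/\tau}\Bigr).
\]
Each negative similarity satisfies $s_{ij}\ge -1$, so the denominator is at least $(B-1)e^{-1/\tau}$. This gives
\[
\mathcal{L}_i\ \ge\ \ln(B-1)-\frac{1+s_{ii}}{\tau},
\]
which rearranges to a lower bound on $s_{ii}$ of the form $-1-\tau(\mathcal{L}_i-\ln(B-1))$. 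Averaging over $i$ and comparing with the target expression, I then need to replace the linear quantity $\mathcal{L}_i-\ln(B-1)$ by the logarithm $\ln\bigl(2\mathcal{L}/(B-1)\bigr)$.

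The key intermediate step is to bound the averaged per-pair loss by the global $\mathcal{L}_{CMCL}$. Because $\mathcal{L}_{CMCL}$ is a sum of nonnegative terms over pairs in $\mathcal{MP}$, any single pair's average is at most $\mathcal{L}_{CMCL}$. The factor $2$ will be absorbed from the symmetric (two-directional) use of the pair, or from the triangle inequality through an intermediate modality: the unified $\bm{z}_i$ is a convex combination of $\bm{z}_i^{id}$ and the other aligned modalities, so its similarity to $\bm{z}_i^m$ is controlled by the pairwise similarities that appear in $\mathcal{L}_{CMCL}$.

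The main obstacle is matching the logarithmic form exactly. The natural InfoNCE bound is linear in $\mathcal{L}$, whereas the statement has $\tau\ln(2\mathcal{L}/(B-1))$. My first attempt will be to work at the level of exponentials instead. From the loss definition, $e^{\mathcal{L}_i}\ge (B-1)e^{-(1+s_{ii})/\tau}$, and combining this with an inequality such as $\ln x\le x-1$ or its reverse use should yield a bound involving $\ln\mathcal{L}$.

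If that attempt does not close, the fallback is to bound the mean of $(1+s_{ii})$ through $\tau\ln$ of the mean of the exponentials, using Jensen's inequality on $\exp$. That route produces a term $\tau\ln\bigl(\mathbb{E}[\cdot]/(B-1)\bigr)$, and I would then bound the expectation by $2\mathcal{L}_{CMCL}$, as sketched above, to reach the stated constant.

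One caveat: the bound is only meaningful when the argument of the logarithm keeps the expression under the square root nonnegative, and when the resulting value does not exceed the trivial bound of $2$. I will note these conditions explicitly rather than try to derive them.
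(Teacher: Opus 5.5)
\textbf{There is a genuine gap: the step that is supposed to produce $\ln\bigl(2\mathcal{L}/(B-1)\bigr)$ cannot work as you set it up.} The rest of your skeleton is fine: the unit-vector identity, Jensen for the square root, and bounding each negative similarity below by $-1$.

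When you bound the whole denominator below by $(B-1)e^{-1/\tau}$, you discard the positive term. What remains is only $\mathcal{L}_i\ge\ln x_i$, i.e.\ $x_i\le e^{\mathcal{L}_i}$, where $x_i=(B-1)e^{-(1+s_{ii})/\tau}$. Per sample, the stated bound is equivalent to $x_i\le 2\mathcal{L}_i$. Since $e^{y}-2y\ge 2-2\ln 2>0$ for every $y$, your inequality is strictly weaker. After averaging it gives $\sqrt{4+2\tau\ln\bigl(e^{\bar{\mathcal{L}}}/(B-1)\bigr)}$, not the stated constant.

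Applying $\ln x\le x-1$ or Jensen on $\exp$ afterwards cannot repair this, because the needed information has already been thrown away. Your fallback has the same problem: it needs $\mathbb{E}_i[x_i]\le 2\bar{\mathcal{L}}$, which is the same missing per-sample fact. Nor can the factor $2$ come from symmetrization or a triangle inequality. Dropping nonnegative summands of $\mathcal{L}_{CMCL}$ gives only $\bar{\mathcal{L}}\le\mathcal{L}_{CMCL}$, and a convex-combination argument produces nothing of logarithmic form.

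\textbf{The missing idea is to keep the positive term.} The paper's InfoNCE sums over all $j\in\mathcal{B}$, so the positive is in the denominator, and
\[
\mathcal{L}_i=\ln\Bigl(1+\sum_{j\ne i}e^{(s_{ij}-s_{ii})/\tau}\Bigr)\ \ge\ \ln(1+x_i)\ \ge\ \tfrac{x_i}{2}\qquad(0\le x_i\le 1).
\]
Equivalently, $e^{\mathcal{L}_i}\ge 1+x_i$ together with $e^{y}-1\le 2y$ for $0\le y\le 1$; this is your exponential-level first attempt with the $1$ retained. It gives $x_i\le 2\mathcal{L}_i$, hence $\|\bm{z}_i^m-\bm{z}_i\|_2^2=4-2(1+s_{ii})\le 4+2\tau\ln\bigl(2\mathcal{L}_i/(B-1)\bigr)$. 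Jensen for $\sqrt{\cdot}$ and for the concave $\ln$ then yields exactly the stated form in terms of $\bar{\mathcal{L}}$.

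\textbf{Two hypotheses must be added explicitly.}
\begin{itemize}
\item The inequality $\ln(1+x)\ge x/2$ needs a range condition, e.g.\ $x_i\le 1$, i.e.\ $s_{ii}\ge -1+\tau\ln(B-1)$. Without it the lemma is false, not merely vacuous. Take $\bm{z}_i^m=\bm{a}$ and $\bm{z}_i=-\bm{a}$ for all $i$, with the loss being the $(m,F)$ InfoNCE term itself. Then every $s_{ij}=-1$, $\mathcal{L}_i=\ln B$, and the left side equals $2$. But $2\ln B<B-1$ for $B\ge 4$, so the right side is strictly below $2$. Your caveat about nonnegativity under the root is therefore not the condition that matters.
\item The pair $(m,F)$ is not in $\mathcal{MP}$, so your ``any single pair's average is at most $\mathcal{L}_{CMCL}$'' does not apply to it. Moreover, $\bm{z}_i$ is built from the transported $\hat{\bm{Z}}^t,\hat{\bm{Z}}^v$ through a learnable residual $\widetilde{\bm{T}}^m$, which $\mathcal{L}_{CMCL}$ does not constrain. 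So $\bar{\mathcal{L}}\le\mathcal{L}_{CMCL}$ must be imposed as an assumption (the $(m,F)$ InfoNCE term being a summand of the loss) rather than derived.
\end{itemize}
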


\begin{lemma}[Modality-to-Unified Error Bound]
\label{lemma:2}
For any modality \( m \) and any fixed user \( \bm{u} \), the difference between the modality-specific error and the unified error is bounded by both distributional and instance-level alignment terms:
\begin{equation}
\left| \epsilon_m(f) - \epsilon_F(f) \right| \leq (K + L^*) \cdot \mathcal{W}_1(P^m, Q^{id}) + K \cdot \mathbb{E}_i \| \bm{z_i^m} - \bm{z_i} \|,
\end{equation}
where \( \mathcal{W}_1(P^m, Q^{id}) \) is the 1-Wasserstein distance between the modality distribution \( P^m \) and the ID-based target distribution \( Q^{id} \).
\end{lemma}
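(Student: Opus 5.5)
The plan is to split $\epsilon_m(f)-\epsilon_F(f)$ into two pieces by inserting the ID distribution $Q^{id}$ as an intermediate. Write $\epsilon_{id}(f)=\mathbb{E}_{z\sim Q^{id}}\bigl[|f(u,z)-f^*(u,v)|\bigr]$. The triangle inequality then gives
\begin{equation*}
|\epsilon_m(f)-\epsilon_F(f)| \le |\epsilon_m(f)-\epsilon_{id}(f)| + |\epsilon_{id}(f)-\epsilon_F(f)|.
\end{equation*}
The first piece will be controlled by the distributional term $\mathcal{W}_1(P^m,Q^{id})$ and the second by the instance-level term $\mathbb{E}_i\|\bm{z_i^m}-\bm{z_i}\|$. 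We read $f^*(u,v)$ as $f^*(u,z)$ evaluated at the representation, as Assumption~\ref{assumption:2} implicitly does.

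\textbf{Step 1: the distributional term.} Define $g(z)=|f(u,z)-f^*(u,z)|$.
\begin{itemize}
\item By Assumptions~\ref{assumption:1} and \ref{assumption:2}, together with $\bigl||a|-|b|\bigr|\le|a-b|$, the function $g$ is $(K+L^*)$-Lipschitz in $z$.
\item Kantorovich--Rubinstein duality, $\mathcal{W}_1(P,Q)=\sup_{\mathrm{Lip}(h)\le1}\mathbb{E}_P h-\mathbb{E}_Q h$, applied to $g/(K+L^*)$, gives $|\epsilon_m(f)-\epsilon_{id}(f)|\le (K+L^*)\,\mathcal{W}_1(P^m,Q^{id})$.
\item Equivalently, one can integrate $|g(z)-g(z')|\le (K+L^*)\|z-z'\|$ against an optimal coupling of $P^m$ and $Q^{id}$.
\end{itemize}

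\textbf{Step 2: the instance-level term.} Couple $Q^{id}$ and $Q$ item-by-item, since both are empirical measures over the same batch indexed by $i$. Then
\begin{equation*}
|\epsilon_{id}(f)-\epsilon_F(f)|\le \mathbb{E}_i\bigl|g(\bm{z_i^{id}})-g(\bm{z_i})\bigr|.
\end{equation*}
This produces a factor $K+L^*$ and the distance $\|\bm{z_i^{id}}-\bm{z_i}\|$, whereas the statement has a factor $K$ and $\|\bm{z_i^m}-\bm{z_i}\|$.

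To get the stated form, I would instead handle the second comparison by fixing the target $f^*$ at the item's own label. The term $f^*(u,v_i)$ is the same ground truth for both representations of item $i$, so under the identity coupling only $f$ varies:
\begin{equation*}
\bigl||f(u,z)-y_i|-|f(u,z')-y_i|\bigr|\le K\|z-z'\|.
\end{equation*}
Applying this per item with the pair $(\bm{z_i^m},\bm{z_i})$ gives a term $K\,\mathbb{E}_i\|\bm{z_i^m}-\bm{z_i}\|$. Only this instance-level step uses the label-fixed reading; Step 1 keeps $f^*$ depending on $z$ through Assumption~\ref{assumption:2}.

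\textbf{Main obstacle.} The hard part is the bookkeeping that makes these two steps fit together: the Wasserstein route pays $K+L^*$ to move from $P^m$ to $Q^{id}$, and the identity coupling pays $K$ to move from modality $m$ to the unified representation. A naive chain $P^m\to Q^{id}\to Q$ produces $\|\bm{z^{id}}-\bm{z}\|$, not $\|\bm{z^m}-\bm{z}\|$.

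My first attempt would be to run the decomposition through the transported features. The fusion in Eq.~\eqref{equation:16} is built from OT-aligned features, so the residual between $\bm{z_i}$ and the modality is what the instance-level term measures, and the discrepancy between the modality and ID distributions is what $\mathcal{W}_1(P^m,Q^{id})$ absorbs.

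Concretely, I would compare $\epsilon_m$ and $\epsilon_F$ directly via the identity coupling. Since $\bm{z_i}$ and $\bm{z_i^m}$ share item $i$, this gives the $K\,\mathbb{E}_i\|\bm{z_i^m}-\bm{z_i}\|$ term. I would then absorb the mismatch, arising because the distance uses the normalized representation while $Q$ mixes in $Q^{id}$, into the $(K+L^*)\mathcal{W}_1$ term through Step 1. This yields an upper bound that is weaker than needed only by a nonnegative slack, so the stated inequality follows.
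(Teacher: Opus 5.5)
The paper states this lemma without proof; it is only invoked in the proof of the main theorem. So I am judging your argument on its own terms. Your final paragraph does contain a complete proof, but only under the reading the definitions actually use: the target $f^*(\bm{u},v)$ is attached to the item, not to the representation. Under that reading the proof is a single line. With $y_i=f^*(\bm{u},v_i)$, and with $P^m$ and $Q$ the empirical measures over the same batch,
\[
\bigl|\epsilon_m(f)-\epsilon_F(f)\bigr| \le \mathbb{E}_i\bigl|\,|\bm{u}^\top\bm{z}_i^m-y_i|-|\bm{u}^\top\bm{z}_i-y_i|\,\bigr| \le \mathbb{E}_i\bigl|\bm{u}^\top(\bm{z}_i^m-\bm{z}_i)\bigr| \le K\,\mathbb{E}_i\|\bm{z}_i^m-\bm{z}_i\|,
\]
and the lemma follows because $(K+L^*)\,\mathcal{W}_1(P^m,Q^{id})\ge 0$. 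Nothing else is used: not Assumption~\ref{assumption:2}, not Kantorovich--Rubinstein duality, not $Q^{id}$. You should state plainly that the distributional term is pure slack.

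The steps around this line are where the proposal breaks.

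\textbf{Mixed readings.} You cannot read $f^*$ as $f^*(\bm{u},\bm{z})$ in Step 1 and as $f^*(\bm{u},v_i)$ in Step 2, because $\epsilon_m$, $\epsilon_{id}$ and $\epsilon_F$ each have one definition. Under the item-label reading the integrand is not a function of $\bm{z}$ alone. Kantorovich--Rubinstein then does not apply, and $|\epsilon_m-\epsilon_{id}|$ is not controlled by $\mathcal{W}_1(P^m,Q^{id})$, since an optimal coupling may pair different items. Under the representation reading, your Step 1 is correct but the factor-$K$ step is not: it costs $K+L^*$.

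\textbf{The absorption step.} Your closing move, absorbing a ``mismatch'' into $(K+L^*)\,\mathcal{W}_1(P^m,Q^{id})$, cannot be carried out, because $\mathcal{W}_1(P^m,Q^{id})$ says nothing about how far $\bm{z}_i$ is from $\bm{z}_i^m$. Under the representation reading the lemma is in fact false. Take all items identical, with $\bm{z}^m=\bm{z}^{id}=\bm{e}_1$ and $\bm{z}=\bm{e}_2$; this is attainable by the fusion, since the learnable residual leaves $\hat{\bm{Z}}^t$ unconstrained. Set $\bm{u}=K\bm{e}_3$ and $f^*(\bm{u},\bm{z})=L^*\bm{e}_2^\top\bm{z}$. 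Then $\mathcal{W}_1(P^m,Q^{id})=0$, $\epsilon_m=0$ and $\epsilon_F=L^*$, while the right-hand side is $\sqrt{2}\,K$, so the bound fails whenever $L^*>\sqrt{2}\,K$.

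\textbf{Normalization.} There is no normalization mismatch inside this lemma, since the same vectors appear in the errors and in the instance term. If there were one, $\mathcal{W}_1$ could not absorb it. Take $\bm{z}^m=\bm{z}^{id}=\bm{e}_1$, $\bm{z}=2\bm{e}_1$, $\bm{u}=K\bm{e}_1$ and $y=0$. The normalized distance and $\mathcal{W}_1$ are both zero, yet $|\epsilon_m-\epsilon_F|=K$.

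So fix the item-label reading at the outset and delete Step 1 and the absorption sentence; the display above is then the whole proof.
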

The right-hand side of the inequality is bounded by the distribution-level Wasserstein distance and the instance-level Euclidean distance. This lemma formally justifies the rationality of our dual-granularity semantic alignment. Next, we give the Theorem \ref{theorem: 1} and its proof.

\begin{theorem}[Alignment Consistency and Fusion Comprehensiveness of RecGOAT]
\label{theorem: 1}
For any fixed user embedding \( \bm{u} \in \mathcal{U} \) and for all modalities \( m \in \mathcal{M} = \{t, v, \text{id}\} \), the following guarantees hold:

(1) \textbf{Consistency Guarantee}:
\begin{equation} \label{equation: 19}
\begin{aligned}
\max_{m \in \mathcal{M}} \, \epsilon_m(f)  - \epsilon_F(f) \ \leq \ & (K + L^*) \cdot \mathcal{W}_1(P^m, Q^{id}) \\
& + \sqrt{ 4 K^2 + 2\tau K^2 \ln\left( \frac{2\mathcal{L}_{CMCL}}{B-1} \right) }.
\end{aligned}
\end{equation}

(2) \textbf{Comprehensiveness Guarantee}:
\begin{equation} \label{equation:20}
\begin{aligned}
\epsilon_F(f) \ \leq \ \min_{m \in \mathcal{M}} \Bigl\{ & \epsilon_m(f) + (K + L^*) \cdot \mathcal{W}_1(P^m, Q^{id}) \\
& + \sqrt{ 4 K^2 + 2\tau K^2 \ln\left( \frac{2\mathcal{L}_{CMCL}}{B-1} \right) } \Bigr\}.
\end{aligned}
\end{equation}
\end{theorem}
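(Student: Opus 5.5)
The plan is to derive both parts of Theorem~\ref{theorem: 1} directly from Lemma~\ref{lemma:2}, and then control its instance-level term with Lemma~\ref{lemma:1}. Fix a user embedding $\bm{u}$ with $\|\bm{u}\|\le K$ (Assumption~\ref{assumption:1}) and a modality $m\in\mathcal{M}$. For $m=\text{id}$ the convention $P^{id}=Q^{id}$ gives $\mathcal{W}_1(P^{id},Q^{id})=0$, so the id case is covered by the same argument. Lemma~\ref{lemma:2} gives
\[
|\epsilon_m(f)-\epsilon_F(f)| \le (K+L^*)\,\mathcal{W}_1(P^m,Q^{id}) + K\,\mathbb{E}_i\|\bm{z}_i^m-\bm{z}_i\|.
\]
I will bound the second term uniformly in $m$. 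By Lemma~\ref{lemma:1}, $K\,\mathbb{E}_i\|\bm{z}_i^m-\bm{z}_i\| \le K\sqrt{4+2\tau\ln(2\mathcal{L}_{CMCL}/(B-1))}$. Since $K>0$, moving $K$ inside the square root gives $\sqrt{4K^2+2\tau K^2\ln(2\mathcal{L}_{CMCL}/(B-1))}$, which is exactly the constant appearing in \eqref{equation: 19} and \eqref{equation:20}. This step presupposes that the radicand is nonnegative, which is already implicit in Lemma~\ref{lemma:1}.

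For the consistency guarantee (1), I drop the absolute value in the one-sided direction: $\epsilon_m(f)-\epsilon_F(f)\le |\epsilon_m(f)-\epsilon_F(f)|$. Combining with the previous paragraph bounds $\epsilon_m(f)-\epsilon_F(f)$ for every $m$. Taking the maximum over $m\in\mathcal{M}$ on the left then yields \eqref{equation: 19}, with the Wasserstein term evaluated at the maximizing modality. I would read the $m$ on the right-hand side of \eqref{equation: 19} that way, or alternatively replace it by $\max_m\mathcal{W}_1(P^m,Q^{id})$ to make the statement unambiguous.

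For the comprehensiveness guarantee (2), I use the other direction of the absolute value: $\epsilon_F(f)\le\epsilon_m(f)+|\epsilon_m(f)-\epsilon_F(f)|$. This gives, for each fixed $m$,
\[
\epsilon_F(f)\le \epsilon_m(f)+(K+L^*)\,\mathcal{W}_1(P^m,Q^{id})+\sqrt{4K^2+2\tau K^2\ln\bigl(2\mathcal{L}_{CMCL}/(B-1)\bigr)}.
\]
Because this holds for every $m\in\mathcal{M}$, $\epsilon_F(f)$ is bounded by the minimum of the right-hand side over $m$, which is \eqref{equation:20}.

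The substantive work is all contained in Lemmas~\ref{lemma:1} and~\ref{lemma:2}, so the only real obstacle at this level is bookkeeping. One point is ensuring that the Lemma~\ref{lemma:1} bound applies to each pair $(m,\text{unified})$ with the same global $\mathcal{L}_{CMCL}$; I would justify this because $\mathcal{L}_{CMCL}$ sums nonnegative pairwise terms, so each pairwise loss is dominated by the global one and the right-hand side of Lemma~\ref{lemma:1} is monotone in the loss. The other point is interpreting the free index $m$ in \eqref{equation: 19} consistently, as described above.
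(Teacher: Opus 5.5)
Your proposal is correct and follows the paper's proof essentially verbatim: you apply Lemma~\ref{lemma:2} in each direction, bound the instance-level term with Lemma~\ref{lemma:1} (absorbing $K$ into the square root), then take the maximum over $m$ on the left for (1) and the minimum over $m$ on the right for (2). Your side remarks are bookkeeping the paper leaves implicit, not a change of method: the free index $m$ in Eq.~(\ref{equation: 19}), the $m=\text{id}$ case via $\mathcal{W}_1(P^{id},Q^{id})=0$, and nonnegativity of the radicand.
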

\begin{proof}
Starting from Lemma \ref{lemma:2}, we have for any modality \( m \):
\begin{equation*}
\epsilon_m(f) - \epsilon_F(f) \leq (K + L^*) \cdot \mathcal{W}_1(P^m, Q^{id}) + K \cdot \mathbb{E}_i \| \bm{z_i^m} - \bm{z_i} \|.
\end{equation*}
Applying the bound from Lemma \ref{lemma:1} to instance-level term yields:
\begin{equation*}
\epsilon_m(f) \leq \epsilon_F(f) + (K + L^*) \cdot \mathcal{W}_1(P^m, Q^{id}) + K \cdot \sqrt{ 4 + 2\tau \ln\left( \frac{2\mathcal{L}_{CMCL}}{B-1} \right) }.
\end{equation*}
Since this inequality holds for all \( m \in \mathcal{M} \), taking the maximum over modalities on the left side yields the \textbf{Consistency Guarantee}:
\small \[
\max_{m \in M} \epsilon_m(u) \leq \epsilon_F(u) + (K + L^*) \cdot \mathcal{W}_1(P^m, Q^{id}) + \sqrt{ 4 K^2 + 2\tau K^2 \ln\left( \frac{2\mathcal{L}_{CMCL}}{B-1} \right) }.
\]


Similarly, from Lemma \ref{lemma:2} we also have:
\begin{equation*}
\epsilon_F(f) - \epsilon_m(f) \leq (K + L^*) \cdot \mathcal{W}_1(P^m, Q^{id}) + K \cdot \mathbb{E}_i \| \bm{z_i^m} - \bm{z_i} \|.
\end{equation*}
Applying the same substitution from Lemma \ref{lemma:1} gives:
\begin{equation*}
\epsilon_F(u) \leq \epsilon_m(u) + (K + L^*) \cdot \mathcal{W}_1(P^m, Q^{id}) + K \cdot \sqrt{ 4 + 2\tau \ln\left( \frac{2\mathcal{L}_{CMCL}}{B-1} \right) }.
\end{equation*}
As this is valid for all \( m \in M \), the tightest bound is achieved by taking the minimum over modalities on the right-hand side, resulting in the \textbf{Comprehensiveness Guarantee}:
{\small \begin{equation*}
    \epsilon_F(u) \leq \min_{m \in M} \left\{ \epsilon_m(u) + (K + L^*) \cdot \mathcal{W}_1(P^m, Q^{id}) + \sqrt{ 4 K^2 + 2\tau K^2 \ln\left( \frac{2\mathcal{L}_{CMCL}}{B-1} \right) } \right\}.
\end{equation*}}
\end{proof}
Based on Theorem \ref{theorem: 1}, Eq. (\ref{equation: 19}) indicates that by optimizing the Wasserstein distance \( \mathcal{W}_1(P^m, Q^{id}) \), while reducing the contrastive learning loss \( \mathcal{L}_{\text{CMCL}} \), the recommendation consistency between the modal representations with the largest error and the unified representation can be better aligned. On the other hand, Eq. (\ref{equation:20}) shows that the error of the fused unified representation does not exceed the error of any single modality plus the dual-granularity alignment error. That is, through OT-based distribution alignment and cross-modal contrastive learning, the fused representation can effectively integrate multimodal information and enhance recommendation performance. This theorem provides theoretical assurance and principled foundation for our RecGOAT: through dual-granularity alignment (instance-level contrastive learning + distribution-level OT mapping), the model successfully bridges the semantic gap between LLM-enhanced modalities and ID-based interaction signals to achieve both consistent and comprehensive multimodal fusion.

\subsection{Preference Optimization for Recommender}
\label{sec:3.4}
In summary, we optimize the downstream recommendation task using the Bayesian Personalized Ranking (BPR) loss \cite{rendle2009bpr}. The fused user representation $\bm{U}$ is obtained by weighting the user ID embedding \(\bm{z_u^{id}}\) from Section \ref{sec:3.1.2} and the enhanced textual user representation \(\bm{z_u^{t}}\) from Section \ref{sec:3.1.3}. Together with the unified item representation $\bm{Z}$ from Eq. (\ref{equation:16}), the model is optimized with the BPR loss as follows:

\begin{equation}
    \mathcal{L}_{\text{BPR}} = \sum_{(u, i, j) \in \mathcal{O}} -\ln \sigma(f(\bm{u}, \bm{z_i}) - f(\bm{u}, \bm{z_j})),
\end{equation}
where \(\mathcal{O}\) denotes the set of observed \((user, positive\ item, negative\ item)\) triplets, \(\sigma\) is the sigmoid function, and the scoring function is defined as \(f(u, z) = u^\top z\). Items with higher predicted scores are ranked as high-potential candidates for recommendation.

\subsection{Complexity Analysis}
\label{sec:3.5}
To improve the scalability of RecGOAT for large-scale recommendations, our OAT module aligns feature distributions (relying on feature dimension $d$) rather than performing standard OT node matching (which heavily relies on sample size $N$). Consequently, by employing the Sinkhorn-Knopp algorithm (with a maximum of $L$ iterations), the training time and space complexities are reduced to $\mathcal{O}((N+L) \cdot d^2)$ and $\mathcal{O}(N \cdot d^2)$ respectively, achieving linear scalability with respect to the sample size.

\begin{table*}[t]
\renewcommand{\arraystretch}{1.6}
\centering
\caption{Recommendation performance on three Amazon Datasets. Here, R@10 and N@10 denote Recall@10 and NDCG@10, respectively. The best results are highlighted in \textbf{\textit{bold}}, and the second-best are \underline{\textit{underlined}}. The \textit{asterisk}* indicates that the improvement of our RecGOAT is statistically significant based on t-test with $p$-value $< 0.001$. Our model achieves statistically significant state-of-the-art performance on all metrics across each dataset.}
\label{tab:main_results}
\resizebox{\textwidth}{!}{
\begin{tabular}{lcccccccccccccc}
\toprule
\multirow{3}{*}{\textbf{Dataset}} & \multirow{3}{*}{\textbf{Metric}} 
& \multicolumn{2}{c}{\textbf{ID-based Methods}} 
& \multicolumn{5}{c}{\textbf{Multimodal Methods}} 
& \multicolumn{4}{c}{\textbf{Large Models-based Methods}} 
& \multicolumn{2}{c}{\textbf{Ours}} \\
\cmidrule(lr){3-4}
\cmidrule(lr){5-9}
\cmidrule(lr){10-13}
\cmidrule(lr){14-15}
& & BPR & LightGCN & VBPR & FREEDOM & DiffMM 
& UGT & FindRec & TALLRec & A-LLMRec 
& UniMP & IRLLRec 
& \multirow{2}{*}{\textbf{RecGOAT}} & \multirow{2}{*}{\textbf{Improv.}} \\
& & (UAI'09) & (SIGIR'20) & (AAAI'16) & (MM'23) & (MM'24) 
& (RecSys'24) & (KDD'25) & (RecSys'23) & (KDD'24) 
& (ICLR'24) & (SIGIR'25)  \\
\midrule
\multirow{2}{*}{Baby} 
& R@10  & 0.0357 &	0.0479 & 0.0423 & 0.0624 & 0.0617 & 0.0602 & \underline{0.0647} &	0.0382 & 0.0379 & 0.0472 & 0.0624 & \textbf{0.0671*} &	\textcolor{mypink}{$\uparrow$ 3.71\%} \\
& N@10  & 0.0192 & 0.0257 & 0.0223 & 0.0324 & 0.0321 & 0.0325 & \underline{0.0348}	& 0.0197 & 0.0203 & 0.0267 & 0.0318 & \textbf{0.0369*} & \textcolor{mypink}{$\uparrow$ 6.03\%} \\
\midrule
\multirow{2}{*}{Sports} 
& R@10  & 0.0432 & 0.0569 & 0.0558 & 0.0710 & 0.0687 & 0.0705 & 0.0707  & 	0.0418 & 0.0402 & 0.0528 & \underline{0.0712}  & \textbf{0.0745*} & \textcolor{mypink}{$\uparrow$ 4.63\%} \\
& N@10  & 0.0241 & 0.0311	& 0.0307 	& 0.0382	& 0.0357	& \underline{0.0391} & 0.0383 & 0.0247 & 0.0223 & 0.0288 & 0.0375 & \textbf{0.0415*} & \textcolor{mypink}{$\uparrow$ 6.14\%} \\
\midrule
\multirow{2}{*}{Electronics} 
& R@10  & 0.0235 & 0.0363	& 0.0293	& 0.0396	& 0.0386 & \underline{0.0430} & 0.0395	& 0.0374 & 0.0347 &  0.0363 & 0.0419 & \textbf{0.0468*} & \textcolor{mypink}{$\uparrow$ 8.84\%} \\
& N@10  & 0.0127	& 0.0204	& 0.0159 	& 0.0220	& 0.0228 & \underline{0.0254} & 0.0210 & 0.0178 & 0.0201	& 0.0215 & 0.0248 & \textbf{0.0271*} & \textcolor{mypink}{$\uparrow$ 6.69\%} \\
\bottomrule
\end{tabular}
}
\end{table*}

\section{Experiments}

In this section, we conduct extensive experiments on three public Amazon datasets and a large-scale online advertising platform to address the following key research questions:

\begin{itemize}
    \item \textbf{RQ1}: Does our RecGOAT achieve SOTA performance compared to classical recommendation methods as well as leading multimodal and large model-based approaches? 
    \item \textbf{RQ2}: What is the negative impact of semantic conflict between LLM-enhanced modalities and ID signals? What are the individual and combined contributions of Cross-Modal Contrastive Learning (CMCL) and Optimal Adaptive Transport (OAT) in resolving semantic heterogeneity and improving recommendation performance?
    \item \textbf{RQ3}: How does our RecGOAT demonstrate alignment consistency and fusion comprehensiveness?
    \item \textbf{RQ4}: How effective are the key modules of RecGOAT in improving performance for an industrial-level online advertising system?
    \item \textbf{RQ5}: How sensitive is our RecGOAT to hyperparameter settings?
    \item \textbf{RQ6}: What is the computational efficiency of RecGOAT compared to other advanced methods?
\end{itemize}

\subsection{Experimental Setup}
\subsubsection{Datasets} We conduct experiments on three public Amazon datasets \footnote{\url{https://cseweb.ucsd.edu/~jmcauley/datasets/amazon/links.html}}: Baby, Sports, and Electronics \cite{mcauley2015amazon}. Each dataset contains user-item interactions along with visual and textual descriptions of items. Detailed statistics of datasets are summarized in Table \ref{tab:dataset_stats}.

\begin{table}[t]
\centering
\renewcommand{\arraystretch}{1.0}
\caption{Statistics of our experimental datasets.}
\label{tab:dataset_stats}
\begin{tabular}{lcccc}
\toprule
Datasets & \# Users & \# Items & \# Interactions & Sparsity \\
\midrule
Baby & 19,445 & 7,050 & 160,792 & 99.88\% \\
Sports & 35,598 & 18,357 & 296,337 & 99.95\% \\
Electronics & 192,403 & 63,001 & 1,689,188 & 99.99\% \\
\bottomrule
\end{tabular}
\end{table}

\subsubsection{Baselines and Evaluation Metrics}
We compare our RecGOAT with the following three categories of representative multimodal recommendation methods: (1) \textbf{Traditional ID-based Methods}: BPR \cite{rendle2009bpr} and LightGCN \cite{he2020lightgcn}. (2) \textbf{Multimodal Methods}: VBPR \cite{he2016vbpr} (CNN-based), FREEDOM \cite{zhou2023freedom} (GNN-based), DiffMM \cite{jiang2024diffmm} (Diffusion-based), UGT \cite{yi2024UGT} (Transformer-based), and FindRec \cite{wang2025findrec} (Mamba-based). (3) \textbf{LM-enhanced Methods} (with different semantic alignment/fusion paradigms): TALLRec \cite{bao2023tallrec} (fine-tuning), A-LLMRec \cite{kim2024allmrec} (in-context learning), UniMP \cite{wei2024UniMP} (cross-attention), and IRLLRec \cite{wang2025IRLLRec} (contrastive learning + KL divergence).

To evaluate the recommendation performance, we adopt two widely-used metrics: Recall (R@K) and Normalized Discounted Cumulative Gain (NDCG, N@K), where K is set to 10. Each metric is computed over 10 runs, and the average result is reported.

\subsubsection{Implementation Details} Following common practice \cite{zhou2023bm3, xu2025fastMMrec}, we split each dataset into an 8:1:1 ratio for training, validation, and testing under the 5-core setting. For multimodal baselines, we uniformly employ the publicly available 4096-dimensional visual features and 384-dimensional textual features provided by the open-source framework MMRec \cite{zhou2023mmrec}, adhering to its standard parameter configuration. For LM-enhanced Methods, we consistently apply the LLM-enhanced modality inputs introduced in this work.


\subsection{Overall Performance (RQ1)}

To verify the core motivation of this paper and demonstrate the advancement of RecGOAT, we compare it with three representative categories of multimodal recommendation methods, as presented in Table \ref{tab:main_results}, and draw the following key conclusions:
\begin{enumerate}
    \item Multimodal methods (e.g., FindRec, FREEDOM) outperform traditional ID-based methods (e.g., LightGCN), confirming the auxiliary role of multimodal information in alleviating the sparsity of interaction IDs. However, LLM-enhanced methods, despite their powerful semantic extraction and generation capabilities, generally underperform multimodal baselines while incurring significantly higher computational costs. This performance gap stems from their insufficient emphasis on aligning world knowledge with recommendation ID signals.
    \item Our RecGOAT achieves substantial improvements over LLM-enhanced baselines (e.g., 0.0468 vs. 0.0419 for Electronics), primarily due to its theoretically-grounded dual semantic alignment, especially the previously overlooked principle of distribution-level alignment between LLM-enhanced modalities and ID embeddings. Specifically, RecGOAT’s superior performance over TALLRec (fine tuning-based alignment), A‑LLMRec (in-context learning‑based alignment), and UniMP (cross‑attention‑based alignment) demonstrates that alignment from a distribution perspective is crucial, and instance‑level or pair‑wise alignment alone is insufficient. 
    \item Our RecGOAT outperforms IRLLRec (contrastive learning + KL divergence‑based alignment), indicating the advantage of the Wasserstein distance (OT) over KL divergence for distribution alignment. KL divergence only measures the ratio of probability densities and is insensitive to the geometric structure of the sample space. For example, aligning the feature “red” as “purple” incurs a similar penalty as aligning it as “indoor item” under KL divergence, whereas the Wasserstein distance would assign a much lower cost to the former, more semantically related distribution.
\end{enumerate}

Overall, by integrating cross‑modal contrastive learning and optimal adaptive transport within a dual‑alignment framework, RecGOAT achieves state‑of‑the‑art recommendation performance.

\begin{table}[t]
\renewcommand{\arraystretch}{1.1}
\setlength{\tabcolsep}{1.5pt}
\centering
\small 
\caption{Ablation study on different alignment and fusion strategies for three Amazon Datasets.}
\label{tab:ablation_study}
{
\begin{tabular}{@{}lccccccc@{}}
\toprule
\multirow{2}{*}{\textbf{Dataset}} & \multirow{2}{*}{\textbf{Metric}} 
& \multicolumn{1}{c}{\textbf{ID-only}} 
& \multicolumn{2}{c}{\textbf{Naive MM Fusion}} 
& \multicolumn{2}{c}{\textbf{Our Alignment}} 
& \multicolumn{1}{c}{\textbf{Ours}} \\
\cmidrule(lr){3-3}
\cmidrule(lr){4-5}
\cmidrule(lr){6-7}
\cmidrule(lr){8-8}
& & LightGCN & Concat & Sum & w/ CMCL & w/ OAT & {\textbf{RecGOAT}}  \\
\midrule
\multirow{2}{*}{Baby} 
& R@10  & 0.0479 & 0.0472	& 0.0422	& 0.0601	& \underline{0.0623}	& \textbf{0.0671*} \\
& N@10  & 0.0257 & 0.0244	& 0.0217	& 0.0330	& \underline{0.0346}	& \textbf{0.0369*} \\
\midrule
\multirow{2}{*}{Sports} 
& R@10  & 0.0569	& 0.0573	& 0.0525	& 0.0695	& \underline{0.0718}	& \textbf{0.0745*} \\
& N@10  & 0.0311	& 0.0305	& 0.0277	& 0.0381	& \underline{0.0397}	& \textbf{0.0415*} \\
\midrule
\multirow{2}{*}{Elec.} 
& R@10  & 0.0363	& 0.0402	& 0.0385	& 0.0388	& \underline{0.0437}	& \textbf{0.0468*} \\
& N@10  & 0.0204	& 0.0222	& 0.0210	& 0.0217	& \underline{0.0245}	& \textbf{0.0271*} \\
\bottomrule
\end{tabular}
}
\end{table}

\subsection{Ablation Study (RQ2)}

To demonstrate the semantic conflict between LLM-enhanced modalities and ID signals and to quantify the effectiveness of different alignment strategies, we evaluate several variants: an ID-only method (LightGCN), naive Multimodal (MM) fusion (i.e., Concat and Sum) with ID from LightGCN and modality form GAT, and our RecGOAT with individual or combined alignment components (i.e., CMCL and OAT). The results are summarized in Table \ref{tab:ablation_study}, leading to the following observations:

\begin{itemize}
    \item Simple fusion of LLM-enhanced modal embeddings via concatenation or summation yields inferior or inconsistent performance compared to the ID-only LightGCN (e.g., on the Baby dataset), confirming the severe semantic heterogeneity between large model semantics and recommendation IDs.
    \item  Within our dual-granularity alignment framework, OAT consistently outperforms CMCL across all datasets, highlighting the critical role of distribution-level alignment. Furthermore, the organic integration of instance-level and distribution-level alignment mutually reinforces both components, resulting in comprehensive semantic fusion and optimal multimodal recommendation performance.
\end{itemize}

\subsection{Alignment Consistency and Fusion Comprehensiveness (RQ3)}
To validate the theoretical conclusions established in Section \ref{sec:3.3}, we conducted experiments on the Baby dataset to examine alignment consistency and fusion comprehensiveness, as illustrated in Figure \ref{fig:g3}. First, Figure (\ref{fig:g_3_1}) demonstrates that the impact of different weighting coefficients in Eq. (\ref{equation:16}) on the final recommendation performance is robust, indicating strong consistency among the aligned representations $\hat{\bm{Z}}^t$, $\hat{\bm{Z}}^v$, and $\bm{Z}^{id}$. Second, Figure (\ref{fig:g_3_2}) shows that the fused item representation $\bm{Z}$ achieves superior recommendation performance compared to any single aligned modality $\hat{\bm{Z}}^t$, $\hat{\bm{Z}}^v$, and $\bm{Z}^{id}$. This observation is consistent with the conclusion of Theorem \ref{theorem: 1} (2), which supports the comprehensiveness of the unified representation.


\begin{figure}[t]
\centering

\begin{subfigure}[b]{0.5\textwidth}
\centering
\includegraphics[width=1.03\linewidth, height=3.2cm]{fig/g_3_1.png}
\caption{Consistency: Triangular heatmap of performance with different modality weights in Eq. (\ref{equation:16}).}
\label{fig:g_3_1}
\end{subfigure}

\vspace{0.2cm}

\begin{subfigure}[b]{0.5\textwidth}
\centering
\includegraphics[width=0.99\linewidth, height=3.2cm]{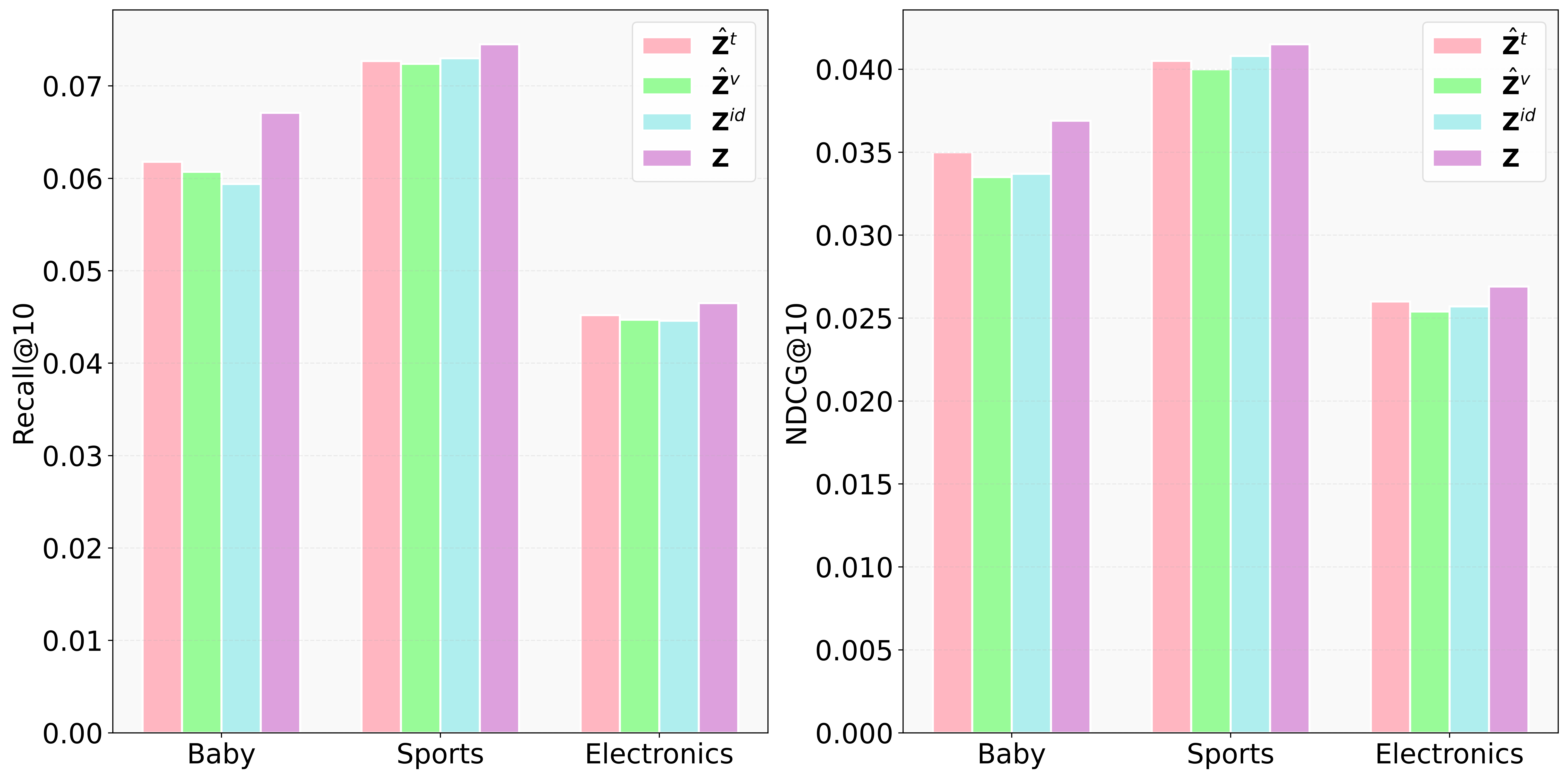}
\caption{Comprehensiveness: Performance comparison of different modalities and the fused representation after alignment,}
\label{fig:g_3_2}
\end{subfigure}

\caption{Alignment Consistency and Fusion Comprehensiveness of RecGOAT on the Baby Dataset.}
\label{fig:g3}
\end{figure}

\subsection{Online Performance (RQ4)}

To verify the performance of RecGOAT in industrial systems, we deployed the OT component of RecGOAT to a large-scale industrial advertising system and conducted rigorous online A/B testing.

Our online baseline ranking model adopts an industry-proven architecture, comprising manually designed features, user sequence encoders, and a stacked deep neural network (DNN). Specifically, the baseline model incorporates two distinct high-dimensional vector representations: a multimodal content embedding \(\bm{z}_{i}^{mm}\) derived from the MLLM, and a collaborative signal representation \(\bm{z}_{i}^{id}\) obtained by mapping recommendation item IDs to embedding vectors.

In the OT component, these vectors \(\bm{z}_{i}^{mm}\), \(\bm{z}_{i}^{id}\) are fed into two DNN layers for nonlinear mapping, generating transformed representations $f_{mm}(\bm{z}_{i}^{mm}) \in \mathbb{R}^d$ and $f_{id}(\bm{z}_{i}^{id}) \in \mathbb{R}^d$. Following \cite{peyre2019computationalOT}, we model $f_{mm}(\bm{z}_{i}^{mm})$ and $f_{id}(\bm{z}_{i}^{id})$ as  Gaussian distributions, denoted as $\mathcal{N}(\bm{\mu_{mm}}, \boldsymbol{\Sigma}_{mm})$ and $\mathcal{N}(\bm{\mu_{id}}, \bm{\Sigma}_{id})$ respectively. The transport cost between the two corresponding vectors can be derived as:

\begin{equation}
\mathcal{W}_2^2(P^{mm}, Q^{id}) = \|\bm{\mu}_{mm} - \bm{\mu}_{id}\|^2 + \mathcal{B}(\boldsymbol{\Sigma}_{mm}, \boldsymbol{\Sigma}_{id})^2, 
\end{equation}

where 

\begin{equation}
\mathcal{B}(\boldsymbol{\Sigma}_{mm}, \boldsymbol{\Sigma}_{id})^2 \stackrel{\text{def}}{=} \operatorname{tr}\left(\boldsymbol{\Sigma}_{mm} + \boldsymbol{\Sigma}_{id} - 2\left(\boldsymbol{\Sigma}_{mm}^{1/2} \boldsymbol{\Sigma}_{id} \boldsymbol{\Sigma}_{mm}^{1/2}\right)^{1/2}\right).
\end{equation}

We then minimize $\mathcal{W}_2^2(P^{mm}, Q^{id})$ to achieve distributional alignment between content embedding and collaborative signal representation in the feature space. Both refined representations are subsequently integrated into the online model as dense features.

We conduct the online A/B test on 5\% traffic from the production system, covering approximately 20 million unique users. As shown in Table~\ref{tab:online_result}, our proposed RecGOAT yields a 1.5\% lift in advertiser value (ADVV) \cite{chai2025longer} relative to the baseline. Notably, RecGOAT achieves a 2.3\% ADVV lift on long-tail data, which demonstrates the superiority of our approach in terms of generalization capability.

\begin{table}[h]
\renewcommand{\arraystretch}{1.0}
\setlength{\tabcolsep}{6.8pt}
\centering
\small 
\caption{Results on online advertising platform.}
\label{tab:online_result}
{
    \begin{tabular}{lcc}
    \toprule
    Method & Setting & ADVV \\
    \midrule
    \multirow{2}{*}{RecGOAT} & all & \textcolor{mypink}{$\uparrow$ 1.5\%} \\
    & long-tail & \textcolor{mypink}{$\uparrow$ 2.3\%} \\
    \bottomrule
    \end{tabular}
}
\end{table}

\section{Hyperparameter Sensitivity Analysis (RQ5)}
To evaluate the sensitivity of RecGOAT to hyperparameters, we report its recommendation performance across three datasets under varying values of $K$ (used for constructing the item-item graph via KNN), as illustrated in Figure \ref{g4}. As observed, when $K$ varies from 10 to 50, the performance metrics (i.e., Recall@10 and NDCG@10) remain highly stable across all datasets. This demonstrates that our proposed model is highly robust to the selection of $K$, thereby alleviating the necessity for exhaustive and time-consuming hyperparameter tuning in practical deployments.

\begin{figure}[h]
    \centering
    \includegraphics[width=3.5in]{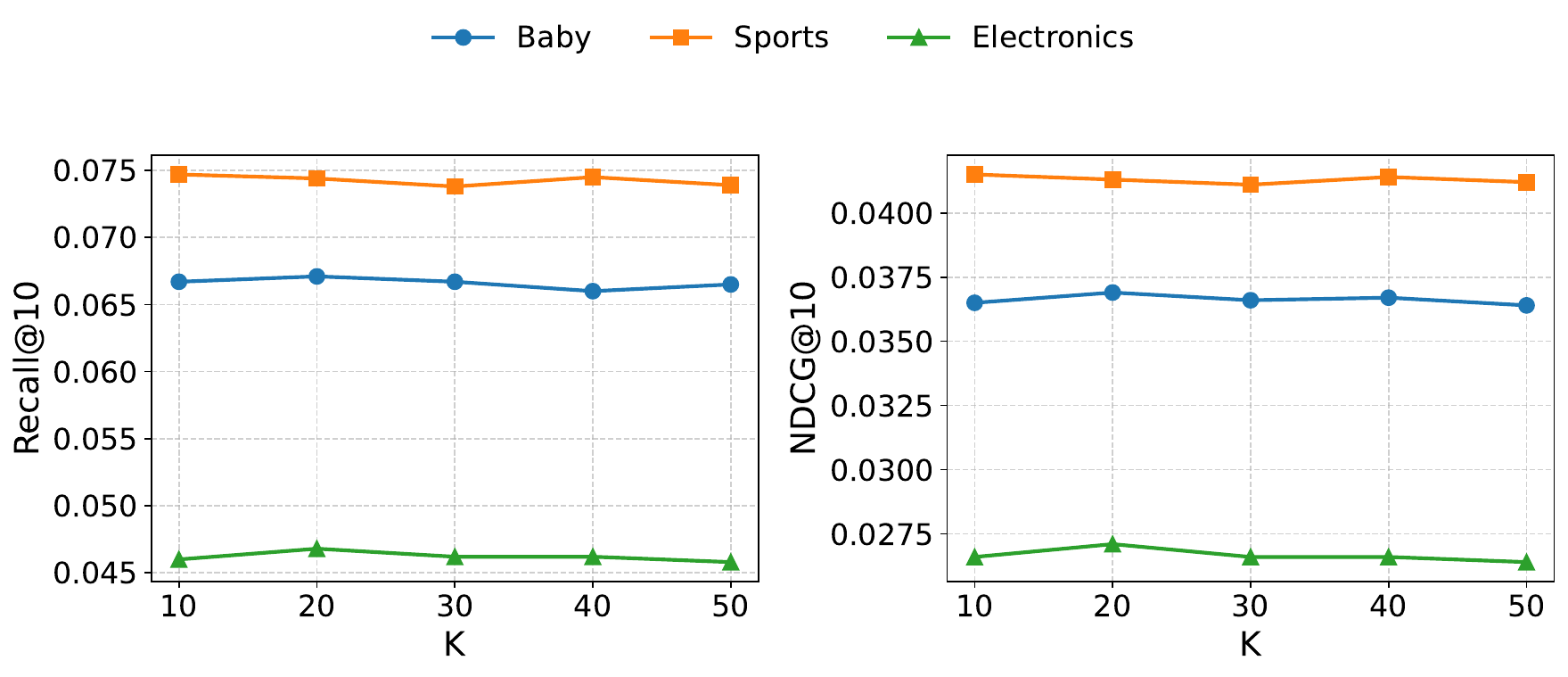}
    \caption{Hyperparameter sensitivity on three Amazon Dataset for the number of nearest neighbors $K$.} 
    \label{g4}
\end{figure}

\section{Runtime Analysis (RQ6)}

To evaluate computational efficiency, we report the training and inference times of RecGOAT alongside three representative baselines on the Baby dataset, as summarized in Table \ref{tab:time_complexity}. Specifically, we select one state-of-the-art and recent model from each category of recommendation paradigms: LightGCN (Traditional ID-based Methods), FindRec (Multimodal Methods), and IRLLRec (LM-enhanced Methods). All models are evaluated under identical hyperparameters and computational resources for a fair comparison. As expected, LightGCN achieves the fastest training and inference speeds owing to its highly simplified linear graph convolutional architecture. Compared to more complex baselines, our proposed RecGOAT outperforms IRLLRec in both training time (3.12 s/epoch) and inference time (1.37 s/evaluation set). Although RecGOAT incurs slightly higher time costs than FindRec, the overall computational overhead remains within the same order of magnitude and is highly acceptable. This indicates that while introducing superior recommendation capabilities, RecGOAT maintains highly competitive computational efficiency, demonstrating its strong potential for deployment in real-world recommender systems.

\begin{table}[t]
\renewcommand{\arraystretch}{1.2}
\setlength{\tabcolsep}{2.8pt}
  \centering
  \caption{Comparison of training and inference time on the Baby Dataset.}
  \label{tab:time_complexity}
  \begin{tabular}{lcccc}
    \toprule
    Model & LightGCN & FindRec & IRLLRec & RecGOAT \\
    \midrule
    \makecell[l]{Training Time \\ (s/training\_epoch)} & 1.33 & 2.95 & 3.41 & 3.12 \\
    \hdashline \addlinespace[0.5ex] 
    \makecell[l]{Inference Time \\ (s/evaluation\_set)} & 0.98 & 1.23 & 2.03 & 1.37 \\
    \bottomrule
  \end{tabular}
\end{table}

\section{Conclusions}

In this paper, we propose RecGOAT, a dual‑granularity semantic alignment framework for LLM‑enhanced multimodal recommendation. It integrates instance‑level alignment via cross‑modal contrastive learning and distribution‑level alignment via optimal adaptive transport to resolve the semantic heterogeneity between large‑model representations and recommendation ID signals. Theoretically, we prove the consistency and comprehensiveness of the aligned representations derived from RecGOAT. Extensive experiments on three Amazon datasets validate our theoretical results and demonstrate SOTA performance against relevant baselines. Furthermore, A/B testing on a large‑scale advertising platform confirms the scalability of RecGOAT. In future work, we will explore interactions among multiple optimal transport alignments and extend semantic alignment solutions to omni-modal large recommendation model.

\bibliographystyle{ACM-Reference-Format}
\bibliography{sample-base}

@String{Computing = "Computing" }

@String{Springer = "Springer-Verlag" }

@inproceedings{he2017NCF,
  title={Neural collaborative filtering},
  author={He, Xiangnan and Liao, Lizi and Zhang, Hanwang and Nie, Liqiang and Hu, Xia and Chua, Tat-Seng},
  booktitle={Proceedings of the 26th international conference on world wide web},
  pages={173--182},
  year={2017}
}

@inproceedings{wang2019NGCF,
  title={Neural graph collaborative filtering},
  author={Wang, Xiang and He, Xiangnan and Wang, Meng and Feng, Fuli and Chua, Tat-Seng},
  booktitle={Proceedings of the 42nd international ACM SIGIR conference on Research and development in Information Retrieval},
  pages={165--174},
  year={2019}
}

@inproceedings{xia2022HCCF,
  title={Hypergraph contrastive collaborative filtering},
  author={Xia, Lianghao and Huang, Chao and Xu, Yong and Zhao, Jiashu and Yin, Dawei and Huang, Jimmy},
  booktitle={Proceedings of the 45th International ACM SIGIR conference on research and development in information retrieval},
  pages={70--79},
  year={2022}
}

@article{zhao2024recommenderinllm,
  title={Recommender systems in the era of large language models (llms)},
  author={Zhao, Zihuai and Fan, Wenqi and Li, Jiatong and Liu, Yunqing and Mei, Xiaowei and Wang, Yiqi and Wen, Zhen and Wang, Fei and Zhao, Xiangyu and Tang, Jiliang and others},
  journal={IEEE Transactions on Knowledge and Data Engineering},
  volume={36},
  number={11},
  pages={6889--6907},
  year={2024},
  publisher={IEEE}
}

@article{lin2025rsbllm,
  title={How can recommender systems benefit from large language models: A survey},
  author={Lin, Jianghao and Dai, Xinyi and Xi, Yunjia and Liu, Weiwen and Chen, Bo and Zhang, Hao and Liu, Yong and Wu, Chuhan and Li, Xiangyang and Zhu, Chenxu and others},
  journal={ACM Transactions on Information Systems},
  volume={43},
  number={2},
  pages={1--47},
  year={2025},
  publisher={ACM New York, NY}
}

@inproceedings{chen2019VECF,
  title={Personalized fashion recommendation with visual explanations based on multimodal attention network: Towards visually explainable recommendation},
  author={Chen, Xu and Chen, Hanxiong and Xu, Hongteng and Zhang, Yongfeng and Cao, Yixin and Qin, Zheng and Zha, Hongyuan},
  booktitle={Proceedings of the 42nd international ACM SIGIR conference on research and development in information retrieval},
  pages={765--774},
  year={2019}
}

@inproceedings{zhou2023bm3,
  title={Bootstrap latent representations for multi-modal recommendation},
  author={Zhou, Xin and Zhou, Hongyu and Liu, Yong and Zeng, Zhiwei and Miao, Chunyan and Wang, Pengwei and You, Yuan and Jiang, Feijun},
  booktitle={Proceedings of the ACM web conference 2023},
  pages={845--854},
  year={2023}
}

@inproceedings{li2025MoDiCF,
  title={Generating with fairness: A modality-diffused counterfactual framework for incomplete multimodal recommendations},
  author={Li, Jin and Wang, Shoujin and Zhang, Qi and Yu, Shui and Chen, Fang},
  booktitle={Proceedings of the ACM on Web Conference 2025},
  pages={2787--2798},
  year={2025}
}

@article{xu2025MMrecsurvey,
  title={A Survey on Multimodal Recommender Systems: Recent Advances and Future Directions},
  author={Xu, Jinfeng and Chen, Zheyu and Yang, Shuo and Li, Jinze and Wang, Wei and Hu, Xiping and Hoi, Steven and Ngai, Edith},
  journal={arXiv preprint arXiv:2502.15711},
  year={2025}
}

@inproceedings{he2016vbpr,
  title={VBPR: visual bayesian personalized ranking from implicit feedback},
  author={He, Ruining and McAuley, Julian},
  booktitle={Proceedings of the AAAI conference on artificial intelligence},
  volume={30},
  number={1},
  year={2016}
}

@article{cui2018mvrnn,
  title={MV-RNN: A multi-view recurrent neural network for sequential recommendation},
  author={Cui, Qiang and Wu, Shu and Liu, Qiang and Zhong, Wen and Wang, Liang},
  journal={IEEE Transactions on Knowledge and Data Engineering},
  volume={32},
  number={2},
  pages={317--331},
  year={2018},
  publisher={IEEE}
}

@inproceedings{pomo2025VLRec,
  title={Do Recommender Systems Really Leverage Multimodal Content? A Comprehensive Analysis on Multimodal Representations for Recommendation},
  author={Pomo, Claudio and Attimonelli, Matteo and Danese, Danilo and Narducci, Fedelucio and Di Noia, Tommaso},
  booktitle={Proceedings of the 34th ACM International Conference on Information and Knowledge Management},
  pages={2377--2387},
  year={2025}
}

@article{malitesta2025modalfusion,
  title={Formalizing multimedia recommendation through multimodal deep learning},
  author={Malitesta, Daniele and Cornacchia, Giandomenico and Pomo, Claudio and Merra, Felice Antonio and Di Noia, Tommaso and Di Sciascio, Eugenio},
  journal={ACM Transactions on Recommender Systems},
  volume={3},
  number={3},
  pages={1--33},
  year={2025},
  publisher={ACM New York, NY}
}

@article{gao2023GNNRec,
  title={A survey of graph neural networks for recommender systems: Challenges, methods, and directions},
  author={Gao, Chen and Zheng, Yu and Li, Nian and Li, Yinfeng and Qin, Yingrong and Piao, Jinghua and Quan, Yuhan and Chang, Jianxin and Jin, Depeng and He, Xiangnan and others},
  journal={ACM Transactions on Recommender Systems},
  volume={1},
  number={1},
  pages={1--51},
  year={2023},
  publisher={ACM New York, NY, USA}
}

@article{anand2025GNNRec,
  title={A survey on recommender systems using graph neural network},
  author={Anand, Vineeta and Maurya, Ashish Kumar},
  journal={ACM Transactions on Information Systems},
  volume={43},
  number={1},
  pages={1--49},
  year={2025},
  publisher={ACM New York, NY, USA}
}

@inproceedings{he2020lightgcn,
  title={Lightgcn: Simplifying and powering graph convolution network for recommendation},
  author={He, Xiangnan and Deng, Kuan and Wang, Xiang and Li, Yan and Zhang, Yongdong and Wang, Meng},
  booktitle={Proceedings of the 43rd International ACM SIGIR conference on research and development in Information Retrieval},
  pages={639--648},
  year={2020}
}

@inproceedings{zhou2023freedom,
  title={A tale of two graphs: Freezing and denoising graph structures for multimodal recommendation},
  author={Zhou, Xin and Shen, Zhiqi},
  booktitle={Proceedings of the 31st ACM international conference on multimedia},
  pages={935--943},
  year={2023}
}

@inproceedings{lin2024gume,
  title={GUME: Graphs and User Modalities Enhancement for Long-Tail Multimodal Recommendation},
  author={Lin, Guojiao and Zhen, Meng and Wang, Dongjie and Long, Qingqing and Zhou, Yuanchun and Xiao, Meng},
  booktitle={Proceedings of the 33rd ACM International Conference on Information and Knowledge Management},
  pages={1400--1409},
  year={2024}
}

@inproceedings{li2023RecFormer,
  title={Text is all you need: Learning language representations for sequential recommendation},
  author={Li, Jiacheng and Wang, Ming and Li, Jin and Fu, Jinmiao and Shen, Xin and Shang, Jingbo and McAuley, Julian},
  booktitle={Proceedings of the 29th ACM SIGKDD Conference on Knowledge Discovery and Data Mining},
  pages={1258--1267},
  year={2023}
}

@inproceedings{yi2024UGT,
  title={A unified graph transformer for overcoming isolations in multi-modal recommendation},
  author={Yi, Zixuan and Ounis, Iadh},
  booktitle={Proceedings of the 18th ACM Conference on Recommender Systems},
  pages={518--527},
  year={2024}
}

@article{yang2023DreamRec,
  title={Generate what you prefer: Reshaping sequential recommendation via guided diffusion},
  author={Yang, Zhengyi and Wu, Jiancan and Wang, Zhicai and Wang, Xiang and Yuan, Yancheng and He, Xiangnan},
  journal={Advances in Neural Information Processing Systems},
  volume={36},
  pages={24247--24261},
  year={2023}
}

@inproceedings{jiang2024diffmm,
  title={Diffmm: Multi-modal diffusion model for recommendation},
  author={Jiang, Yangqin and Xia, Lianghao and Wei, Wei and Luo, Da and Lin, Kangyi and Huang, Chao},
  booktitle={Proceedings of the 32nd ACM International Conference on Multimedia},
  pages={7591--7599},
  year={2024}
}

@inproceedings{wang2025findrec,
  title={FindRec: Stein-Guided Entropic Flow for Multi-Modal Sequential Recommendation},
  author={Wang, Maolin and Xiao, Yutian and Wang, Binhao and Zhang, Sheng and Ye, Shanshan and Wang, Wanyu and Yin, Hongzhi and Guo, Ruocheng and Xu, Zenglin},
  booktitle={Proceedings of the 31st ACM SIGKDD Conference on Knowledge Discovery and Data Mining V. 2},
  pages={3008--3018},
  year={2025}
}

@inproceedings{bao2023tallrec,
  title={Tallrec: An effective and efficient tuning framework to align large language model with recommendation},
  author={Bao, Keqin and Zhang, Jizhi and Zhang, Yang and Wang, Wenjie and Feng, Fuli and He, Xiangnan},
  booktitle={Proceedings of the 17th ACM conference on recommender systems},
  pages={1007--1014},
  year={2023}
}

@inproceedings{
wei2024UniMP,
title={Towards Unified Multi-Modal Personalization: Large Vision-Language Models for Generative Recommendation and Beyond},
author={Tianxin Wei and Bowen Jin and Ruirui Li and Hansi Zeng and Zhengyang Wang and Jianhui Sun and Qingyu Yin and Hanqing Lu and Suhang Wang and Jingrui He and Xianfeng Tang},
booktitle={The Twelfth International Conference on Learning Representations},
year={2024},
url={https://openreview.net/forum?id=khAE1sTMdX}
}

@inproceedings{yi2025GollaRec,
  title={A Multi-modal Large Language Model with Graph-of-Thought for Effective Recommendation},
  author={Yi, Zixuan and Ounis, Iadh},
  booktitle={Proceedings of the 2025 Conference of the Nations of the Americas Chapter of the Association for Computational Linguistics: Human Language Technologies (Volume 1: Long Papers)},
  pages={1591--1606},
  year={2025}
}

@inproceedings{wei2019mmgcn,
  title={MMGCN: Multi-modal graph convolution network for personalized recommendation of micro-video},
  author={Wei, Yinwei and Wang, Xiang and Nie, Liqiang and He, Xiangnan and Hong, Richang and Chua, Tat-Seng},
  booktitle={Proceedings of the 27th ACM international conference on multimedia},
  pages={1437--1445},
  year={2019}
}

@inproceedings{zhang2021LATTICE,
  title={Mining latent structures for multimedia recommendation},
  author={Zhang, Jinghao and Zhu, Yanqiao and Liu, Qiang and Wu, Shu and Wang, Shuhui and Wang, Liang},
  booktitle={Proceedings of the 29th ACM international conference on multimedia},
  pages={3872--3880},
  year={2021}
}

@article{vaswani2017attention,
  title={Attention is all you need},
  author={Vaswani, Ashish and Shazeer, Noam and Parmar, Niki and Uszkoreit, Jakob and Jones, Llion and Gomez, Aidan N and Kaiser, {\L}ukasz and Polosukhin, Illia},
  journal={Advances in neural information processing systems},
  volume={30},
  year={2017}
}

@inproceedings{gu2024mamba,
  title={Mamba: Linear-time sequence modeling with selective state spaces},
  author={Gu, Albert and Dao, Tri},
  booktitle={First conference on language modeling},
  year={2024}
}

@article{lopez2025LLM4MRSsurvey,
  title={A Survey on Large Language Models in Multimodal Recommender Systems},
  author={Lopez-Avila, Alejo and Du, Jinhua},
  journal={arXiv preprint arXiv:2505.09777},
  year={2025}
}

@article{liu2024Rec-gpt4v,
  title={Rec-gpt4v: Multimodal recommendation with large vision-language models},
  author={Liu, Yuqing and Wang, Yu and Sun, Lichao and Yu, Philip S},
  journal={arXiv preprint arXiv:2402.08670},
  year={2024}
}

@inproceedings{zhang2025notellm2,
  title={NoteLLM-2: Multimodal Large Representation Models for Recommendation},
  author={Zhang, Chao and Zhang, Haoxin and Wu, Shiwei and Wu, Di and Xu, Tong and Zhao, Xiangyu and Gao, Yan and Hu, Yao and Chen, Enhong},
  booktitle={31st ACM SIGKDD Conference on Knowledge Discovery and Data Mining (KDD 2025)},
  pages={2815--2826},
  year={2025},
  organization={Association for Computing Machinery}
}

@inproceedings{wang2025IRLLRec,
  title={Intent representation learning with large language model for recommendation},
  author={Wang, Yu and Sang, Lei and Zhang, Yi and Zhang, Yiwen},
  booktitle={Proceedings of the 48th International ACM SIGIR Conference on Research and Development in Information Retrieval},
  pages={1870--1879},
  year={2025}
}

@article{santambrogio2015OTapp,
  title={Optimal transport for applied mathematicians},
  author={Santambrogio, Filippo},
  year={2015},
  publisher={Springer}
}

@article{peyre2025OTML,
  title={Optimal Transport for Machine Learners},
  author={Peyr{\'e}, Gabriel},
  journal={arXiv preprint arXiv:2505.06589},
  year={2025}
}

@inproceedings{
cao2022otkge,
title={{OTKGE}: Multi-modal Knowledge Graph Embeddings via Optimal Transport},
author={Zongsheng Cao and Qianqian Xu and Zhiyong Yang and Yuan He and Xiaochun Cao and Qingming Huang},
booktitle={Advances in Neural Information Processing Systems},
year={2022},
url={https://openreview.net/forum?id=gbXqMdxsZIP}
}

@inproceedings{chen2020got,
  title={Graph optimal transport for cross-domain alignment},
  author={Chen, Liqun and Gan, Zhe and Cheng, Yu and Li, Linjie and Carin, Lawrence and Liu, Jingjing},
  booktitle={International Conference on Machine Learning},
  pages={1542--1553},
  year={2020},
  organization={PMLR}
}

@inproceedings{yang2023MOTKD,
  title={Multimodal optimal transport knowledge distillation for cross-domain recommendation},
  author={Yang, Wei and Yang, Jie and Liu, Yuan},
  booktitle={Proceedings of the 32nd ACM International Conference on Information and Knowledge Management},
  pages={2959--2968},
  year={2023}
}

@article{zhang2025qwen3embedding,
  title={Qwen3 Embedding: Advancing Text Embedding and Reranking Through Foundation Models},
  author={Zhang, Yanzhao and Li, Mingxin and Long, Dingkun and Zhang, Xin and Lin, Huan and Yang, Baosong and Xie, Pengjun and Yang, An and Liu, Dayiheng and Lin, Junyang and others},
  journal={arXiv preprint arXiv:2506.05176},
  year={2025}
}

@article{liu2023llava,
  title={Visual instruction tuning},
  author={Liu, Haotian and Li, Chunyuan and Wu, Qingyang and Lee, Yong Jae},
  journal={Advances in neural information processing systems},
  volume={36},
  pages={34892--34916},
  year={2023}
}

@inproceedings{velivckovic2018gat,
  title={Graph Attention Networks},
  author={Veli{\v{c}}kovi{\'c}, Petar and Cucurull, Guillem and Casanova, Arantxa and Romero, Adriana and Li{\`o}, Pietro and Bengio, Yoshua},
  booktitle={International Conference on Learning Representations},
  year={2018}
}

@article{yang2025qwen3,
  title={Qwen3 technical report},
  author={Yang, An and Li, Anfeng and Yang, Baosong and Zhang, Beichen and Hui, Binyuan and Zheng, Bo and Yu, Bowen and Gao, Chang and Huang, Chengen and Lv, Chenxu and others},
  journal={arXiv preprint arXiv:2505.09388},
  year={2025}
}

@inproceedings{chen2020simcl,
  title={A simple framework for contrastive learning of visual representations},
  author={Chen, Ting and Kornblith, Simon and Norouzi, Mohammad and Hinton, Geoffrey},
  booktitle={International conference on machine learning},
  pages={1597--1607},
  year={2020},
  organization={PmLR}
}

@inproceedings{li2024cdnmf,
  title={Contrastive deep nonnegative matrix factorization for community detection},
  author={Li, Yuecheng and Chen, Jialong and Chen, Chuan and Yang, Lei and Zheng, Zibin},
  booktitle={ICASSP 2024-2024 IEEE International Conference on Acoustics, Speech and Signal Processing (ICASSP)},
  pages={6725--6729},
  year={2024},
  organization={IEEE}
}

@article{peyre2019computationalOT,
  title={Computational optimal transport: With applications to data science},
  author={Peyr{\'e}, Gabriel and Cuturi, Marco and others},
  journal={Foundations and Trends{\textregistered} in Machine Learning},
  volume={11},
  number={5-6},
  pages={355--607},
  year={2019},
  publisher={Now Publishers, Inc.}
}

@article{sinkhorn1967first,
  title={Concerning nonnegative matrices and doubly stochastic matrices},
  author={Sinkhorn, Richard and Knopp, Paul},
  journal={Pacific Journal of Mathematics},
  volume={21},
  number={2},
  pages={343--348},
  year={1967},
  publisher={Mathematical Sciences Publishers}
}

@article{cuturi2013sinkhorn,
  title={Sinkhorn distances: Lightspeed computation of optimal transport},
  author={Cuturi, Marco},
  journal={Advances in neural information processing systems},
  volume={26},
  year={2013}
}

@article{courty2017OTproof,
  title={Joint distribution optimal transportation for domain adaptation},
  author={Courty, Nicolas and Flamary, R{\'e}mi and Habrard, Amaury and Rakotomamonjy, Alain},
  journal={Advances in neural information processing systems},
  volume={30},
  year={2017}
}

@inproceedings{rendle2009bpr,
  title={BPR: Bayesian personalized ranking from implicit feedback},
  author={Rendle, Steffen and Freudenthaler, Christoph and Gantner, Zeno and Schmidt-Thieme, Lars},
  booktitle={Proceedings of the Twenty-Fifth Conference on Uncertainty in Artificial Intelligence},
  pages={452--461},
  year={2009}
}

@inproceedings{mcauley2015amazon,
  title={Image-based recommendations on styles and substitutes},
  author={McAuley, Julian and Targett, Christopher and Shi, Qinfeng and Van Den Hengel, Anton},
  booktitle={Proceedings of the 38th international ACM SIGIR conference on research and development in information retrieval},
  pages={43--52},
  year={2015}
}

@inproceedings{xu2025fastMMrec,
  title={The Best is Yet to Come: Graph Convolution in the Testing Phase for Multimodal Recommendation},
  author={Xu, Jinfeng and Chen, Zheyu and Yang, Shuo and Li, Jinze and Ngai, Edith CH},
  booktitle={Proceedings of the 33rd ACM International Conference on Multimedia},
  pages={6325--6334},
  year={2025}
}

@inproceedings{zhou2023mmrec,
  title={Mmrec: Simplifying multimodal recommendation},
  author={Zhou, Xin},
  booktitle={Proceedings of the 5th ACM International Conference on Multimedia in Asia Workshops},
  pages={1--2},
  year={2023}
}

@inproceedings{kim2024allmrec,
  title={Large language models meet collaborative filtering: An efficient all-round llm-based recommender system},
  author={Kim, Sein and Kang, Hongseok and Choi, Seungyoon and Kim, Donghyun and Yang, Minchul and Park, Chanyoung},
  booktitle={Proceedings of the 30th ACM SIGKDD Conference on Knowledge Discovery and Data Mining},
  pages={1395--1406},
  year={2024}
}

@inproceedings{chai2025longer,
  title={Longer: Scaling up long sequence modeling in industrial recommenders},
  author={Chai, Zheng and Ren, Qin and Xiao, Xijun and Yang, Huizhi and Han, Bo and Zhang, Sijun and Chen, Di and Lu, Hui and Zhao, Wenlin and Yu, Lele and others},
  booktitle={Proceedings of the Nineteenth ACM Conference on Recommender Systems},
  pages={247--256},
  year={2025}
}

@inproceedings{yang2024mami,
  title={Multimodal-aware Multi-intention Learning for Recommendation},
  author={Yang, Wei and Yang, Qingchen},
  booktitle={Proceedings of the 32nd ACM International Conference on Multimedia},
  pages={5663--5672},
  year={2024}
}

@inproceedings{yang2025fitmm,
  title={FITMM: Adaptive Frequency-Aware Multimodal Recommendation via Information-Theoretic Representation Learning},
  author={Yang, Wei and Zhong, Rui and Chen, Yiqun and Li, Shixuan and Ping, Heng and Lu, Chi and Jiang, Peng},
  booktitle={Proceedings of the 33rd ACM International Conference on Multimedia},
  pages={6193--6202},
  year={2025}
}







\end{document}